\documentclass[12pt,reqno]{amsart}
\usepackage{graphicx}
\usepackage{textcomp}
\usepackage{amsthm} 
\usepackage[top=1in, bottom=1.25in, left=1.25in, right=1.25in]{geometry}
\usepackage[colorlinks=true, citecolor=blue, urlcolor=blue ]{hyperref}
\usepackage{enumitem}
\usepackage{bm}
\usepackage{bbm}
\usepackage[stretch=10]{microtype} 
\usepackage{empheq}
\usepackage{color}
\usepackage{tcolorbox}

\usepackage[%
style=phys,%
articletitle=true,biblabel=brackets,
chaptertitle=false,pageranges=false%
]
{biblatex}

\DeclareFieldFormat{titlecase}{#1} 
\usepackage{pxfonts,txfonts}
\usepackage{tgpagella}
\usepackage[T1]{fontenc}

\newcommand{\MM}{\mathbbm{M}}

\newcommand{\HHH}{\mathcal{H}}

\newcommand{\Mod}{\mathrm{Mod}}

\DeclareMathOperator{\tr}{Tr}

\DeclareMathOperator{\rank}{rank}

\newcommand{\id}{\mathbbm{1}}
\newcommand{\idm}{\mathbbm{I}}
\newcommand{\J}{\mathbbm{J}}

\newcommand{\bra}[1]{\langle #1 |}
\newcommand{\ket}[1]{| #1 \rangle}

\newcommand{\M}{\mathcal{M}} 

\newtheorem{defn}{Definition}[section]

\newtheorem{obsr}[defn]{Observation}

\newtheorem*{note*}{Note}
\newtheorem*{rmk*}{Remark}

\newtheorem{thm}[defn]{Theorem}
\newtheorem{lem}[defn]{Lemma}
\newtheorem{prop}[defn]{Proposition}
\newtheorem{cor}[defn]{Corollary}

\theoremstyle{remark}

\newtheorem*{pf}{Proof}

  \title{Extremal Marginal States of Maximal Rank in $(d, d+m)$}
 
 \author{Indu Bala and Swapan Rana}
 \address{Indu Bala and Swapan Rana, Physics and Applied Mathematics Unit, Indian Statistical Institute, Kolkata~700108, India.}
 \email{indu.qic@gmail.com}
 \email{swapanqic@gmail.com}
 
\begin{document}
 \date{\today}

\begin{abstract}
		We study the extreme points of the convex set $\mathcal{C}(\rho_1,\rho_2)$ of bipartite quantum states with fixed marginals $\rho_1$ and $\rho_2$. We construct extreme points in $(d,\,d+m)$ dimension, of rank $d+m$, matching the highest possible value, for all $d\geq 3$, $m >  \frac{d^2-2d-2}{2}$ (when $d=2$, $m\geq 1$). This proves the existence of extremal states with relatively large rank and also covers all the known examples. We further show that, in order to analyze the extreme points of $\mathcal{C}(\rho_1,\rho_2)$, it is sufficient to study the special case $\mathcal{C}(\mathcal{D}_1,\mathcal{D}_2)$, where the marginals are diagonal. Additionally, we observe that it is sufficient to consider $d_1\leq d_2$. Thus, our results show that apart from possibly a few finite cases, for each $d_1$, the maximal rank is achieved almost all times.   
	\end{abstract}
\maketitle
	\section{Introduction}
 The combined quantum system  is associated with the Hilbert space $\HHH_1\otimes \HHH_2$. A \emph{state} $\rho$ on $\HHH$ is  a positive operator with unit trace.   $\mathcal{C}(\rho_1,\rho_2)$ be the set of all states $\rho$ on $\HHH_1 \otimes H_2$ such that the reduced density matrices with respect to $\HHH_1$ and $\HHH_2$ are given by $\rho_2$ and $\rho_1$ respectively. Since $\mathcal{C}(\rho_1,\rho_2)$ is a compact convex set so it will be in closed convex  hull of its extreme point. The necessary and sufficient condition of extremal marginal state have been  studied in \cite{LANDAU,PARTHASARATHY,Rudolph}. In particular, K.R Parthasarathy  proved that the rank of extremal  marginal state  is bounded by $\lfloor\sqrt{d_1^2+d_2^2-1}\rfloor$ \cite{PARTHASARATHY}. Later Rudolph asked the question whether 
 \[\mathcal{MR}(\rho):=\max\left\{\text{rank}(\rho):\rho\, \text{is an extreme point of the set}\, \mathcal{C}(\rho_1,\rho_2)\right\}=\left\lfloor\sqrt{d_1^2+d_2^2-1}\right\rfloor.\]
 The answer to this question depends only on the dimensions of the underlying Hilbert spaces and the choice of marginals. In the simplest case of $d_1=d_2=2$, the bound is not sharp for $\mathcal{C}\left(\frac{\idm_2}{2},\frac{\idm_2}{2}\right)$. Indeed, the extreme points of
  $\mathcal{C}\left(\frac{\idm_2}{2},\frac{\idm_2}{2}\right)$  are precisely the maximally entangled pure states \cite{PARTHASARATHY}, and hence every extreme point has rank one. Therefore, although the upper bound equals $2$, the maximal attainable rank of an extreme point in this case is only $1$. On the other hand, for several higher-dimensional cases the bound is known to be sharp. These examples indicate that the geometry of $\mathcal{C}(\rho_1,\rho_2)$ becomes considerably richer in larger dimensions.
The dimensions and marginals for which the sharpness of the bound is currently known are summarized in Table~\ref{tab:rho_pairs}.

\begin{table}[h]
	\centering
	\renewcommand{\arraystretch}{1.3}
	\setlength{\tabcolsep}{12pt}
	\begin{tabular}{c c c}
		\hline
		\textbf{$(d_1,d_2)$} & \textbf{$(\rho_1,\rho_2)$} & \textbf{Ref.} \\ 
		\hline
		
		$(2,3)$ 
		& $\left(\frac{\idm_2}{2}, \frac{\idm_3}{3}\right)$ 
		& \cite{KANMANI} \\ 
		
		$(d,d),\ d=3,4$ 
		& $\left(\frac{\idm_d}{d}, \frac{\idm_d}{d}\right)$ 
		& \cite{Ohno} \\ 
		
		$(d,d),\ d=5,9,12,5k,\ 3 \leq k \leq 14$ 
		& $\left(\frac{\idm_d}{d}, \frac{\idm_d}{d}\right)$ 
		& \cite{DEVENDRA} \\ 
		
		$(3,4)$ 
		& $\left(\frac{\idm_3}{3}, \frac{\idm_4}{4}\right)$ 
		& \cite{DEVENDRA} \\
		
		$(2,d_2),\ d_2 \geq 4$ 
		& $\left(Z, \frac{\idm_{d_2}}{d_2}\right)$ 
		& \cite{DEVENDRA} \\
		
		$(d,d+m),\ d\geq 3, m >  \frac{d^2-2d-2}{2}$ 
		& $\left(Z_1, \frac{\idm_{d+m}}{d+m}\right)$ 
		& This work \\
		
		$(2,2+m),\ m \geq  1$ 
		& $\left(Z_1, \frac{\idm_{2+m}}{2+m}\right)$ 
		& This work \\
		
		$(6,6)$ 
		& $(D,D)$ 
		& This work \\ 
		
		\hline
	\end{tabular}
	\caption{Dimensions $(d_1,d_2)$ and marginals $(\rho_1,\rho_2)$ for which sharpness of the bound is known.}
	\label{tab:rho_pairs}
\end{table}
\[\text{where}\quad Z=\frac{1}{2}\begin{pmatrix}
	1 & \frac{d_2-3}{d_2}\\
	\frac{d_2-3}{d_2} & 1
\end{pmatrix}=\frac{3}{2d_2}\idm_{2}+\frac{d_2-3}{2}\J\]

and
\[ Z_1=p \frac{\id_d}{d}+(1-p)\frac{\J}{d},\quad  p=\frac{d+1}{d+m}\]

and 
\[D=\frac{1}{9}\begin{pmatrix}
\idm_3 & 0\\
0 & 2\idm_3
\end{pmatrix}.\]
In this article, we construct an infinite family of extreme points of 
$\mathcal{C}(\rho_1,\rho_2)$ whose ranks attain the value
\[
\left\lfloor \sqrt{d_1^2+d_2^2-1}\right\rfloor .
\]
We further prove that every state in this family is separable.
Our construction therefore establishes the attainability of Parthasarathy’s bound for an infinite class of dimensions.

Another objective of this article is to simplify the study of extremal marginal states by reducing the general problem to a canonical form. We prove that, in order to study the extreme points of $\mathcal{C}(\rho_1,\rho_2)$, it is sufficient to analyze the set $\mathcal{C}(\mathcal{D}_1,\mathcal{D}_2)$, where $\mathcal{D}_1$ and $\mathcal{D}_2$ are diagonal density matrices.\\
Using the Choi--Jamiołkowski isomorphism, Rudolph \cite{Rudolph} established a one-to-one correspondence between the extreme points of
\(\mathcal{C}(\rho_1,\rho_2)\)
and those of \(\mathcal{CP}(\MM_{d_1},\MM_{d_2};\rho_1^T,\rho_2)\).
Therefore, throughout this paper we study the extremal structure via the set\\
\(\mathcal{CP}(\MM_{d_1},\MM_{d_2};\rho_1^T,\rho_2)\)
as this formulation provides a more convenient framework for analysis while fully preserving the extremality properties of the original set.
\section{Extremal Marginal State}
\begin{thm}\cite{Rudolph}\label{Ex}
Let $\Phi:\MM_{d_1}\to \MM_{d_2}$ be a completely positive map in  $\mathcal{CP}(\MM_{d_1},\MM_{d_2};\rho_1^T,\rho_2)$ then $\Phi$
is extremal in $\mathcal{CP}(\MM_{d_1},\MM_{d_2};\rho_1^T,\rho_2)$ if and only if $\Phi$ admits an expression 
\[\Phi(X)=\sum_iK_iXK_i^\dagger\] where $K_i$ are $d_1\times d_2$ matrices, satisfying the following conditions
\begin{enumerate}[label=(\roman*)]
\item $\sum_iK_i^\dagger K_i=  \rho_1^T$
\item $\sum_iK_iK_i^\dagger= \rho_2$
\item $\begin{pmatrix}
K_i^\dagger K_j & 0\\
0 & K_jK_i^\dagger 
\end{pmatrix}_{i,j} $are Linearly independent(LI)
	\end{enumerate}
\end{thm}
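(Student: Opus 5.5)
The plan is to treat $\mathcal{CP}(\MM_{d_1},\MM_{d_2};\rho_1^T,\rho_2)$ as a compact convex set, namely the intersection of the cone of completely positive maps with two affine constraints. We then characterize extremality through the face generated by $\Phi$: $\Phi$ is extremal exactly when no nonzero Hermitian-preserving perturbation $\Phi\pm\epsilon\Psi$ stays in the set. Throughout, fix a minimal Kraus decomposition $\Phi(X)=\sum_{i=1}^r K_iXK_i^\dagger$ with linearly independent $K_i$. Conditions (i) and (ii) are then simply the two marginal constraints rewritten in Kraus form. Up to the transpose/convention on $\rho_1^T$ inherited from the Choi--Jamiołkowski correspondence, one of them is $\Phi^*(\idm)=\sum K_i^\dagger K_i$ and the other is $\Phi(\idm)=\sum K_iK_i^\dagger$. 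So all the content lies in (iii).

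\textbf{Step 1 (parametrizing the face).} The first step is to show that any completely positive $\Psi$ with $\Psi\le t\Phi$ in the completely positive order has the form $\Psi(X)=\sum_{i,j}h_{ij}K_iXK_j^\dagger$ with $H=(h_{ij})\ge 0$. We prove this through Choi matrices: $C_\Psi\le tC_\Phi$ forces $\operatorname{range}C_\Psi\subseteq\operatorname{range}C_\Phi=\operatorname{span}\{\mathrm{vec}(K_i)\}$. It follows that the perturbations $\Phi\pm\epsilon\Psi$ staying in the cone are exactly those of the form $\Psi_H(X)=\sum h_{ij}K_iXK_j^\dagger$ with $H$ Hermitian. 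Conversely, for small $\epsilon$, $\Phi\pm\epsilon\Psi_H$ is completely positive, because its Choi matrix is $V(\idm\pm\epsilon H)V^\dagger$, where $V$ has the columns $\mathrm{vec}(K_i)$.

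\textbf{Step 2 (translating the marginal constraints).} Next, $\Phi\pm\epsilon\Psi_H$ stays in the set if and only if $\Psi_H$ has zero marginals. That is,
\[\sum_{i,j}h_{ij}K_j^\dagger K_i=0\quad\text{and}\quad\sum_{i,j}h_{ij}K_iK_j^\dagger=0,\]
with indices arranged according to the convention in (iii). These two equations combine into the single block equation
\[\sum_{i,j}h_{ij}\begin{pmatrix}K_i^\dagger K_j&0\\0&K_jK_i^\dagger\end{pmatrix}=0\]
after relabeling $h_{ij}\leftrightarrow h_{ji}$, which is harmless since $H$ ranges over Hermitian matrices. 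Moreover, $\Psi_H\ne0$ if and only if $H\ne 0$, by linear independence of the $K_i$.

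\textbf{Step 3 (Hermitian versus complex solutions).} The key remaining point is the passage from Hermitian $H$ to arbitrary complex coefficients, and this is where care is needed. Linear independence in (iii) means that no nonzero complex $(c_{ij})$ kills the sum, while Step 2 only excludes Hermitian $H$. To bridge the gap, observe that the family of block matrices is closed under adjoint with index swap: $(B_{ij})^\dagger=B_{ji}$. Hence if $\sum c_{ij}B_{ij}=0$, then also $\sum \bar c_{ji}B_{ij}=0$. The Hermitian part $\tfrac12(c+c^\dagger)$ and the anti-Hermitian part $\tfrac1{2i}(c-c^\dagger)$ therefore both give Hermitian solutions, and at least one of them is nonzero.

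\textbf{Conclusion.} Putting the steps together: $\Phi$ is non-extremal iff some nonzero Hermitian $H$ solves the block equation, iff the block family is linearly dependent. Finally, we note that (iii) does not depend on the choice of Kraus decomposition among linearly independent ones, since Kraus decompositions are related by unitaries and the block family then transforms by an invertible linear change of basis. This justifies the phrase ``admits an expression''; a non-minimal decomposition would trivially fail (iii).
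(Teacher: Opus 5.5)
Your proof is correct and complete. It is the standard argument behind this result: you parametrize the face of $\Phi$ by Hermitian $H$ via the support of the Choi matrix, reduce the two marginal constraints to the block equation $\sum_{i,j}h_{ij}B_{ij}=0$, and pass from Hermitian to complex coefficients using $B_{ij}^\dagger=B_{ji}$.

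The paper gives no proof of its own, since the statement is imported from \cite{Rudolph}, so there is nothing further to compare. One small point: with your (correct) convention $\Phi(\idm)=\sum_iK_iK_i^\dagger\in\MM_{d_2}$, the Kraus operators are $d_2\times d_1$, not $d_1\times d_2$ as the statement says.
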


\begin{obsr}\label{obs}
	 $\Phi$ is extremal marginal in $\mathcal{CP}(\MM_{d_1},\MM_{d_2};\rho_1^T,\rho_2)$ if and only if $\Phi^*$ will be extremal marginal in $\mathcal{CP}(\MM_{d_2},\MM_{d_1},\rho_2,\rho_1^T)$ .
\end{obsr}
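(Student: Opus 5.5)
The plan is to use the characterization in Theorem~\ref{Ex} and show that its three conditions are symmetric under passing to the adjoint map. Since $\Phi^{**}=\Phi$, it is enough to prove one direction: if $\Phi$ is extremal in $\mathcal{CP}(\MM_{d_1},\MM_{d_2};\rho_1^T,\rho_2)$, then $\Phi^*$ is extremal in $\mathcal{CP}(\MM_{d_2},\MM_{d_1};\rho_2,\rho_1^T)$. The reverse implication then follows by applying this to $\Phi^*$.

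The first step is to fix a Kraus representation $\Phi(X)=\sum_i K_iXK_i^\dagger$ satisfying (i)--(iii). With respect to the Hilbert--Schmidt inner product, the adjoint is
\[
\Phi^*(Y)=\sum_i K_i^\dagger Y K_i .
\]
So $\Phi^*$ is completely positive with Kraus operators $L_i:=K_i^\dagger$. We must also check that the linear map $\Phi\mapsto\Phi^*$ sends the convex set $\mathcal{CP}(\MM_{d_1},\MM_{d_2};\rho_1^T,\rho_2)$ bijectively onto $\mathcal{CP}(\MM_{d_2},\MM_{d_1};\rho_2,\rho_1^T)$. The marginal conditions are $\sum_i K_i^\dagger K_i=\rho_1^T$ and $\sum_iK_iK_i^\dagger=\rho_2$. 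In terms of the $L_i$ these read $\sum_i L_iL_i^\dagger=\rho_1^T$ and $\sum_i L_i^\dagger L_i=\rho_2$, which are exactly conditions (i) and (ii) for $\Phi^*$ with the roles of $\rho_1^T$ and $\rho_2$ exchanged. Because this correspondence is an affine bijection between the two convex sets, it already preserves extreme points directly. The Kraus-level check is nonetheless a useful cross-check of conditions (i)--(iii).

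For condition (iii), the relevant family for $\Phi^*$ consists of the block matrices
\[
\begin{pmatrix} L_i^\dagger L_j & 0\\ 0 & L_jL_i^\dagger\end{pmatrix}
=\begin{pmatrix} K_iK_j^\dagger & 0\\ 0 & K_j^\dagger K_i\end{pmatrix}.
\]
This is the block-swap of $\begin{pmatrix} K_j^\dagger K_i & 0\\ 0 & K_iK_j^\dagger\end{pmatrix}$, which is the original family with the index pair $(i,j)$ replaced by $(j,i)$. Swapping the two diagonal blocks is a linear isomorphism, and relabeling indices only permutes the family. Hence linear independence of the family for $\Phi$ is equivalent to linear independence of the family for $\Phi^*$.

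The only point needing care is that Theorem~\ref{Ex} says ``admits an expression,'' so independence of the choice of Kraus decomposition is implicit. Our argument transports a given decomposition of $\Phi$ to one of $\Phi^*$ and back, so no additional argument is needed. Alternatively, the affine-bijection argument from the first step avoids the issue entirely.
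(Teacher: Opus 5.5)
Your proposal is correct and follows the paper's proof. Both arguments transport a Kraus decomposition satisfying (i)--(iii) of Theorem~\ref{Ex} to the operators $L_i=K_i^\dagger$ for $\Phi^*$. The two marginal conditions then exchange roles, and the linear-independence family is carried to itself by a block swap together with the relabeling $(i,j)\to(j,i)$; you justify this step more explicitly than the paper, which only asserts it.

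Your side remark is also valid and would prove the statement without Theorem~\ref{Ex} at all. The point is that $\Phi\mapsto\Phi^*$ is an involutive, real-affine bijection between the two convex sets, so it preserves extreme points. The map is conjugate-linear over complex scalars, but that does not matter for convex combinations.
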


\begin{pf}
If $\Phi$  is extremal marginal in $\mathcal{CP}(\MM_{d_1},\MM_{d_2};\rho_1^T,\rho_2)$ then $\Phi$ admit an expression  given by (\ref{Ex}). Now 
\[\Phi^*(X)=\sum_i K_i^\dagger X K_i\]  and  $K_i^\dagger$ are $d_2\times d_1$ matrices will satisfy the following conditions
\begin{itemize}
	\item $\sum_iK_i K_i^\dagger=  \rho_2$
	\item $\sum_iK_i^\dagger K_i= \rho_1^T$
	\item $\begin{pmatrix}
		K_i K_j^\dagger & 0\\
		0 & K_j^\dagger K_i
	\end{pmatrix}_{i,j}$will be LI as
	 $\begin{pmatrix}
	K_i^\dagger K_j & 0\\
	0 & K_jK_i^\dagger 
	\end{pmatrix}_{i,j}$are LI.
\end{itemize}
Hence $\Phi^*$ will extremal marginal in $\mathcal{CP}(\MM_{d_2},\MM_{d_1},\rho_2,\rho_1^T)$ , converse part will follow by same argument.
	\end{pf}
	
\begin{note*}\label{note}
To study the extreme points of $\mathcal{CP}(\MM_{d_1},\MM_{d_2};\rho_1^T,\rho_2)$, it is enough to study the case where both $\rho_1$ and $\rho_2$ are invertible\cite{DEVENDRA}. 
\end{note*}

\begin{obsr}\label{D1,D2}
To study the extreme points of $\mathcal{CP}(\MM_{d_1},\MM_{d_2};\rho_1^T,\rho_2)$, it is enough to study the case where both $\rho_1$ and $\rho_2$ are full rank diagonal .
\end{obsr}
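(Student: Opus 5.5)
By the Note, we may assume $\rho_1$ and $\rho_2$ are invertible. The plan is to diagonalize both marginals by unitaries and transport extremality along a unitary conjugation of the maps. Write the spectral decompositions $\rho_1^T = U\mathcal{D}_1U^\dagger$ and $\rho_2 = V\mathcal{D}_2V^\dagger$, with $U,V$ unitary and $\mathcal{D}_1,\mathcal{D}_2$ diagonal and full rank. Given $\Phi\in\mathcal{CP}(\MM_{d_1},\MM_{d_2};\rho_1^T,\rho_2)$ with Kraus operators $K_i$, define
\[\Psi(X)=V^\dagger\,\Phi(UXU^\dagger)\,V=\sum_i L_iXL_i^\dagger,\qquad L_i=V^\dagger K_iU .\]
The claim is that $\Phi\mapsto\Psi$ is an affine bijection from $\mathcal{CP}(\MM_{d_1},\MM_{d_2};\rho_1^T,\rho_2)$ onto $\mathcal{CP}(\MM_{d_1},\MM_{d_2};\mathcal{D}_1,\mathcal{D}_2)$. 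Its inverse is $X\mapsto V\Psi(U^\dagger XU)V^\dagger$. An affine bijection of convex sets preserves extreme points, and that proves the observation.

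To make this rigorous through Theorem~\ref{Ex}, I would check the three conditions directly for the $L_i$.
\begin{enumerate}[label=(\roman*)]
\item $\sum_i L_i^\dagger L_i = U^\dagger\big(\sum_i K_i^\dagger K_i\big)U = U^\dagger\rho_1^TU=\mathcal{D}_1$. This uses the paper's convention that the $\sum K_i^\dagger K_i$ condition carries $\rho_1^T$.
\item $\sum_i L_iL_i^\dagger = V^\dagger\big(\sum_i K_iK_i^\dagger\big)V=\mathcal{D}_2$.
\item The block matrices transform as
\[\begin{pmatrix}L_i^\dagger L_j&0\\0&L_jL_i^\dagger\end{pmatrix}=\begin{pmatrix}U^\dagger&0\\0&V^\dagger\end{pmatrix}\begin{pmatrix}K_i^\dagger K_j&0\\0&K_jK_i^\dagger\end{pmatrix}\begin{pmatrix}U&0\\0&V\end{pmatrix}.\]
Conjugation by a fixed invertible block-diagonal matrix is an invertible linear map on matrices. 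It therefore maps a linearly independent family to a linearly independent family, and conversely.
\end{enumerate}
Hence $\Psi$ satisfies the conditions of Theorem~\ref{Ex} exactly when $\Phi$ does.

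The only point needing care is that Theorem~\ref{Ex} says ``admits an expression''. Extremality is therefore a property of $\Phi$, not of one particular Kraus family. The correspondence $K_i\leftrightarrow L_i=V^\dagger K_iU$ is a bijection between Kraus representations of $\Phi$ and of $\Psi$, so existence of a good family for one map is equivalent to existence for the other.

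I would also note that rank is preserved. The Choi matrix of $\Psi$ is obtained from that of $\Phi$ by a local unitary conjugation, namely by $\overline{U}\otimes V$ up to the transpose convention. The rank of the extremal state is therefore unchanged, which is what makes the reduction useful for the maximal-rank question.
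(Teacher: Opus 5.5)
Your proposal is correct and is the paper's argument. Your $L_i=V^\dagger K_iU$ is exactly the paper's $A_i=VK_iU^\dagger$ after renaming the diagonalizing unitaries, and both proofs verify conditions (i)--(iii) of Theorem~\ref{Ex} by noting that the block matrices are conjugated by a fixed block-diagonal unitary, which preserves linear independence. The affine-bijection viewpoint, the converse direction, and the Choi-rank preservation you add are left implicit in the paper, but they are what make the reduction valid for the maximal-rank question.
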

\textbf{Proof}: If $\Phi$ is extremal marginal in  $\mathcal{CP}(\MM_{d_1},\MM_{d_2};\rho_1^T,\rho_2)$ then there will exist $K_i$ such that $K_i$ satisfy the condition  of theorem  $(\ref{Ex})$.
Consider the unitary U and W such that $U\rho_1^TU^\dagger=D_1$ and $V\rho_2V^\dagger=D_2$. 
Now construct  a new channel $\Psi$ 
\[\Psi(X)=\sum_i VK_iU^\dagger XUK_i^\dagger V^\dagger=\sum_i A_iXA_i^\dagger\]
where $A_i=VK_iU^\dagger.$ Now we can check easily
\begin{itemize}
\item $\sum_iA_i^\dagger A_i=D_1$
\item $\sum_i A_iA_i^\dagger=D_2$
\item $\begin{pmatrix}
A_i^\dagger A_j & 0\\
0 & A_j A_i^\dagger
\end{pmatrix}=\begin{pmatrix}
UK_i^\dagger K_j U^\dagger& 0\\
0 & VK_j K_i^\dagger V^\dagger
\end{pmatrix}=\begin{pmatrix}
U &0 \\
0 & V
\end{pmatrix}\begin{pmatrix}
	K_i^\dagger K_j& 0\\
	0 & K_j K_i^\dagger
\end{pmatrix}\begin{pmatrix}
U^\dagger &0 \\
0 & V^\dagger
\end{pmatrix}$\\
is linear Independent as  a  unitary transformation will send LI to LI. 
Hence  $\Psi$ will be extremal marginal in $CP(\MM_{d_1},\MM_{d_2};D_1,D_2).$ 
\end{itemize}

\begin{rmk*}
 Studying the extreme points of  $\mathcal{CP}(\MM_{d_1},\MM_{d_2};\rho_1^T,\rho_2)$, is equivalent to considering the case where  $\rho_1$ and $\rho_2$ are full rank diagonal and $d_1\leq d_2$ as follows from observation \ref{obs} and observation \ref{D1,D2}.
\end{rmk*}

\section{Extremal marginal states in \(\left(d,d+m\right)\) }

\begin{thm}\label{$(d+m,d)$}
There exist an extreme point of the set  $\mathcal{CP}(\MM_{d},\MM_{d+m};Z,\frac{\idm_{d+m}}{d+m})$ with choi rank $d+m$ $\forall m\geq 1\,\text{and}\,d\geq2$, where $
Z= p\frac{\idm_d}{d}+(1-p)\frac{\J}{d}, \,\text{and}\,\J $ is the $d-$ dimensional matrix whose all entries are 1 and $p=\frac{d+1}{d+m}$.
\end{thm}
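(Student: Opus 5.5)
The plan is to write down explicit rank-one Kraus operators and verify conditions (i)--(iii) of Theorem \ref{Ex} directly. Let $n=d+m$, let $\{e_1,\dots,e_n\}$ be the standard basis of $\mathbb{C}^{n}$, and pick unit vectors $v_1,\dots,v_n\in\mathbb{C}^d$ (to be specified). Put
\[
K_i=\frac{1}{\sqrt{n}}\,\ket{e_i}\bra{v_i},\qquad i=1,\dots,n .
\]
Condition (ii) is then immediate:
\[
\sum_i K_iK_i^\dagger=\frac1n\sum_i\ket{e_i}\bra{e_i}=\frac{\idm_{n}}{n}.
\]
The Choi rank is $n=d+m$, because the $K_i$ have pairwise distinct row supports and are therefore linearly independent. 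Condition (i) reduces to the frame equation
\[
\sum_{i=1}^{n}\ket{v_i}\bra{v_i}=nZ=\frac{d+1}{d}\,\idm_d+\frac{m-1}{d}\,\J .
\]
This uses $np=d+1$ and $n(1-p)=m-1$.

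For condition (iii), compute $K_i^\dagger K_j=\frac1n\langle e_i|e_j\rangle\ket{v_i}\bra{v_j}$ and $K_jK_i^\dagger=\frac1n\langle v_j|v_i\rangle\ket{e_j}\bra{e_i}$.
\begin{itemize}
\item For $i\neq j$ the block matrix is $\bigl(0,\ \tfrac1n\langle v_j|v_i\rangle\ket{e_j}\bra{e_i}\bigr)$. It is supported on the single matrix unit $\ket{e_j}\bra{e_i}$ of the second block, and it is nonzero exactly when $\langle v_j|v_i\rangle\neq0$.
\item For $i=j$ the block matrix is $\frac1n\bigl(\ket{v_i}\bra{v_i},\ \ket{e_i}\bra{e_i}\bigr)$, whose second block sits on the diagonal unit $\ket{e_i}\bra{e_i}$.
\end{itemize}
All $n^2$ matrices therefore have disjoint supports in the second block. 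Hence they are linearly independent if and only if each is nonzero, that is, if and only if $\langle v_i|v_j\rangle\neq0$ for all $i,j$. Repeated vectors are harmless. The whole theorem thus reduces to a frame problem: find $n$ unit vectors in $\mathbb{C}^d$, pairwise non-orthogonal, whose projections sum to $\frac{d+1}{d}\idm_d+\frac{m-1}{d}\J$.

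The frame is built by splitting the target into its two summands.
\begin{itemize}
\item For $\frac{m-1}{d}\J$, take $m-1$ copies of $u=\frac{1}{\sqrt d}(1,\dots,1)^T$, since $\ket{u}\bra{u}=\J/d$.
\item For $\frac{d+1}{d}\idm_d$, take the $d+1$ vertices $w_1,\dots,w_{d+1}$ of a regular simplex in $\mathbb{R}^d\subset\mathbb{C}^d$. These are unit vectors with $\langle w_k|w_l\rangle=-1/d$ for $k\neq l$, and they form a tight frame with $\sum_k\ket{w_k}\bra{w_k}=\frac{d+1}{d}\idm_d$. For $d\ge2$ these inner products are never $0$.
\end{itemize}
The count is $(m-1)+(d+1)=n$, as required. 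Copies of $u$ pair with each other with inner product $1$. For $m=1$ no copy of $u$ appears and the argument is already complete.

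The only remaining point, and the one I expect to need care, is that $\langle u|w_k\rangle\neq0$ for every $k$. The simplex may be rotated freely by an orthogonal map, since this preserves both the tight-frame identity and the mutual inner products. Each condition $\langle u|Ow_k\rangle=0$ cuts out a proper algebraic subset of $O(d)$, so a generic rotation $O$ avoids all $d+1$ of them. Explicitly, one can rotate so that $w_1=u$; then $\langle u|w_k\rangle=-1/d\neq0$ for $k\ge2$. With this choice all pairwise inner products are nonzero, so by Theorem \ref{Ex} the map $\Phi(X)=\sum_iK_iXK_i^\dagger$ is an extreme point of $\mathcal{CP}\bigl(\MM_d,\MM_{d+m};Z,\frac{\idm_{d+m}}{d+m}\bigr)$ with Choi rank $d+m$.
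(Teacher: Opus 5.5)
Your proof is correct, and it takes a genuinely different route from the paper's. The paper keeps $m-1$ rank-one Kraus operators $\frac{1}{\sqrt{d(d+m)}}\ket{e_i}\bra{\J}$, which coincide with your copies $\frac{1}{\sqrt n}\ket{e_i}\bra{u}$. It produces the $\frac{d+1}{d}\idm_d$ part of $nZ$ differently, using $d+1$ rank-$d$ isometries $V_i^\dagger$ built from cyclic shifts on the first $d+1$ output coordinates. Because these isometries have overlapping ranges, condition (iii) cannot be read off from supports. The paper instead computes the $(d+m)^2\times(d+m)^2$ Gram matrix $M$ of the block matrices and writes it as $I_{d+m}\otimes I_{d+m}$ plus positive semidefinite terms.

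You make every Kraus operator rank one with its own output basis vector. This puts the second blocks $K_jK_i^\dagger$ on distinct matrix units, so (iii) reduces to pairwise non-orthogonality of the $v_i$. The only real input is then the simplex tight frame, rotated so that it contains $u$.

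Your route buys three things.
\begin{enumerate}
\item The extremality check becomes a one-line support argument instead of the appendix computation.
\item The argument is more general. Any $\rho_1^T=\frac{1}{d_2}\sum_{i=1}^{d_2}\ket{v_i}\bra{v_i}$ with pairwise non-orthogonal unit vectors $v_i$ gives an extreme point of $\mathcal{CP}(\MM_{d_1},\MM_{d_2};\rho_1^T,\frac{\idm_{d_2}}{d_2})$ of Choi rank $d_2$.
\item Separability of the corresponding state is immediate. A map with rank-one Kraus operators is entanglement breaking; here, with real $v_i$, its Choi matrix is $\frac1n\sum_i\ket{v_i}\bra{v_i}\otimes\ket{e_i}\bra{e_i}$. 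The paper needs a separate PPT computation plus the rank criterion for separability.
\end{enumerate}
What the paper's cyclic construction buys in exchange is a specific family that specializes, for $m=1$ and for $d=2$, to the earlier explicit examples its corollary recovers. Your map attains the same Choi rank for the same marginals, but it is a different extreme point.
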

\begin{pf}
Let	$\Phi:\M_{d}\to \M_{d+m}$ be a CP map defined by
\begin{equation}\label{C.P.E}
\Phi(X)=\frac{1}{d(d+m)}\sum_{i=1}^{d+m}V_i^\dagger XV_i
\end{equation}
where 

\begin{equation}\label{K.r.1}
	V_i=\sum_{k=1}^{d}\ket {e^d_k}\bra{e^{d+m}_{\Mod(k+i-2,d+1)+1}}\quad \forall i=1,2...d+1
	\end{equation}
	
\begin{equation}\label{K.r.2}
V_i=\ket{\J}\bra{e^{d+m}_i}\quad\forall i=d+2,.....d+m 
\end{equation}

\textbf{Claim}: $\Phi$ is extremal in $\mathcal{CP}\left( \MM_{d},\MM_{d+m}; Z, \frac{\idm_{d+m}}{d+m}\right)$.

We can check 
\[\frac{1}{d(d+m)}\sum_{i=1}^{d+m}V_i V_i^{\dagger}=\frac{I_d}{d}+\frac{m-1}{d(d+m)}(\J-I_d)\]

\begin{equation}\label{rho1}
	=p \frac{I_d}{d}+(1-p)\frac{\J}{d},\,\text{where}\,  p=\frac{d+1}{d+m}
\end{equation}
and
\begin{equation}\label{rho2}
\frac{1}{d(d+m)}\sum_{i=1}^{d+m}V_i^\dagger V_i= \frac{I_{d+m}}{d+m}
\end{equation}

To show that the 3rd condition of Theorem~\ref{Ex} is satisfied, we need to find out 

\[M_{i,j}=\begin{pmatrix}
	V_i^\dagger V_j & 0\\
	0 & V_jV_i^\dagger
\end{pmatrix}^{d+m}_{i,j=1}\] 
Let \(A\) be the matrix whose \(((i-1)(d+m)+j)^{\text{th}}\) column is \(\ket{M_{i,j}}\) (We are doing  matrix vectorization column by column). So $A$ is  of order $d+m\times (d+m)^2.$ To show that the vectors \(\ket{M_{i,j}}\) are linearly independent, we will show that \(A A^T = M\) is of full rank.  
A detailed calculation of \(M\) is given in the Appendix in (\ref{AAT=M}). After simplification we will get

\begin{equation*}
\begin{aligned}
		M &=\sum_{x,y=1}^{d+m}(E_{x,x}\otimes E_{y,y}+E_{x,y}\otimes E_{x,y})+(d^2-1)\sum_{r,s=d+2}^{d+m}(E_{r,s}\otimes E_{r,s}+E_{r,r}\otimes E_{s,s})\,+\\
		&2(d-1)\sum_{i,j=1}^{d+1}E_{i,j}\otimes S^{Mod[i-j-1,d+1]+1}
		+(d-1)\sum_{r=d+2}^{d+m}\sum_{i,j=1}^{d+1}(E_{i,r}\otimes E_{j,r}
		+E_{r,i}\otimes E_{r,j}
		+E_{i,j}\otimes E_{r,r}+E_{r,r}\otimes E_{i,j})
\end{aligned}	
\end{equation*}\vspace{0.3cm}
\begin{equation*}
\begin{aligned}
&\quad\quad\,=I_{d+m}\otimes I_{d+m}+\ket{I_{d+m}}\bra{I_{d+m}}+(d^2-1)\ket{0\oplus I_{m-1}}\bra{0\oplus I_{m-1}}+(d^2-1)\left(0\oplus I_{m-1}\right)\otimes \left(0\oplus I_{m-1}\right)\\
&\quad\quad\, +2(d-1)\sum_{i,j=1}^{d+1}E_{i,j}\otimes S^{Mod[i-j-1,d+1]+1}\,+(d-1)\left(\ket{\J\oplus0}\bra{0 \oplus I_{m-1}}+\ket{0\oplus I_{m-1}}\bra{\J \oplus 0}\right)+\\
&\quad\quad\,\quad +(d-1)\left(\left(\J\oplus 0\right)\otimes\left(0\oplus I_{m-1}\right)+\left(0\oplus I_{m-1}\right)\otimes \left(\J\oplus 0 \right)\right)
\end{aligned}	
\end{equation*}
  Hence 
\begin{equation}\label{M}
\begin{aligned}
M&= I_{d+m}\otimes I_{d+m}+\ket{I_{d+m}}\bra{I_{d+m}}+2(d-1)\ket{0\oplus I_{m-1}}\bra{0\oplus I_{m-1}}\\ 
&+2(d-1)\left(0\oplus I_{m-1}\right)\otimes \left(0\oplus I_{m-1}\right)+
+(d-1)\ket{J\oplus I_{m-1}}\bra{J\oplus I_{m-1}}\\
&+(d-1)\left(J\oplus I_{m-1}\right)\otimes\left(J\oplus I_{m-1}\right)+
2(d-1)\sum_{i,j=1}^{d+1}E_{i,j}\otimes S^{Mod[i-j-1,d+1]+1}
\end{aligned}
\end{equation}
where $ S=\sum_{k=1}^{d+1}\ket{e_{\Mod[k,d+1]+1}}\bra{k}$.\\

 \(Y=2(d-1)\sum_{i,j=1}^{d+1}E_{i,j}\otimes S^{Mod[i-j-1,d+1]+1}\)(last term of equation (\ref{M})), is positive semidefinite as we can write 
\[Y=\begin{pmatrix}
	\idm  \\  
	S\\
	S^2\\
	.\\
	.\\
	S^d\\
	0\\
	.\\
	.\\
	0
\end{pmatrix}\begin{pmatrix}
	\idm & S^\dagger & S^{2^\dagger}  &.  &.  &S^{d^\dagger} & 0 & . &. &0
\end{pmatrix}\]
Since in equation (\ref{M}) each term  is positive semidefinite after  $I_{d+m}\otimes I_{d+m}$, it follows that $M$ is full rank, thereby completing the proof.

\end{pf}

\begin{cor}
	If $d\geq 2$, m$\geq 1$ then $\mathcal{MR}(Z,\frac{I_{d+m}}{d+m})\geq d+m$. Moreover, $\mathcal{MR}(Z,\frac{I_{d+m}}{d+m})=d+m$ $\forall$ $m > \frac{d^2-2d-2}{2}.$
\end{cor}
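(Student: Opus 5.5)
The corollary has two parts: a lower bound that comes straight from Theorem \ref{$(d+m,d)$}, and an equality that follows by squeezing with Parthasarathy's upper bound.

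\textbf{Lower bound.} Theorem \ref{$(d+m,d)$} gives, for every $d\geq 2$ and $m\geq 1$, an extremal map $\Phi$ in $\mathcal{CP}(\MM_d,\MM_{d+m};Z,\frac{\idm_{d+m}}{d+m})$ with Choi rank $d+m$. The Kraus operators $V_1,\dots,V_{d+m}$ are linearly independent, since the Gram-type matrix $M$ is full rank. By Rudolph's correspondence, $\Phi$ corresponds to an extreme point $\rho$ of $\mathcal{C}(\rho_1,\rho_2)$ with marginals $Z$ and $\frac{\idm_{d+m}}{d+m}$, up to the transpose convention. The Choi rank of $\Phi$ equals $\rank(\rho)$, so $\mathcal{MR}\geq d+m$. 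One small point to note: $Z$ is real symmetric, so $Z^T=Z$ and the transpose in $\rho_1^T$ causes no trouble.

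\textbf{Upper bound and equality.} Parthasarathy's bound gives $\mathcal{MR}\leq \lfloor\sqrt{d^2+(d+m)^2-1}\rfloor$. It therefore suffices to show that this floor equals $d+m$ exactly when $m>\frac{d^2-2d-2}{2}$. Since $d^2-1\geq 0$, the floor is always at least $d+m$. It equals $d+m$ precisely when
\[
d^2+(d+m)^2-1<(d+m+1)^2 .
\]
Expanding, this is $d^2-1<2(d+m)+1$, i.e.\ $2m>d^2-2d-2$, i.e.\ $m>\frac{d^2-2d-2}{2}$.

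Under this condition, $d+m\leq\mathcal{MR}\leq d+m$, which proves the equality.

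For $d=2$ the threshold is $m>-1$, so the equality holds for all $m\geq 1$, consistent with the table.

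The only step needing any care is this elementary inequality; there is no real obstacle.
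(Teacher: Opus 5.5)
Your proof is correct and follows the same route as the paper. It takes the lower bound $\mathcal{MR}\geq d+m$ from the preceding theorem, the upper bound $\lfloor\sqrt{d^2+(d+m)^2-1}\rfloor$ from Parthasarathy, and the elementary check that this floor equals $d+m$ exactly when $d^2<2(d+m)+2$. Your version is slightly cleaner: you argue in the direction actually needed (the condition $m>\frac{d^2-2d-2}{2}$ forces the floor to be $d+m$) and make the squeeze explicit, whereas the paper's written chain starts from the equality and deduces the condition, relying implicitly on every step being reversible.
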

\begin{proof}
 Since
  \[x-1< \lfloor{x}\rfloor\leq x\]
and  \[\mathcal{MR}\left(Z,\frac{I_{d+m}}{d+m}\right)= \lfloor \sqrt{d^2+(d+m)^2-1}\rfloor=d+m\]
then
\[\sqrt{d^2+(d+m)^2-1}-1<d+m\]
\[\implies d^2+(d+m)^2-1< (d+m)^2+2(d+m)+1\]
\[\implies d^2< 2(d+m)+2\]
\[\implies m > \frac{d^2-2d-2}{2}\]
\end{proof}
\begin{cor}
	Theorem~\ref{$(d+m,d)$} covers both Theorem~5.8 and Theorem~5.2 in \cite{DEVENDRA}. 
\end{cor}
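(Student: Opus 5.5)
The plan is to identify which statements of \cite{DEVENDRA} are being subsumed and to check that each is a specialization of Theorem~\ref{$(d+m,d)$}. This may require the reductions of Observations~\ref{obs} and \ref{D1,D2}.

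Theorem~5.2 of \cite{DEVENDRA} concerns $(d_1,d_2)=(2,d_2)$ with $d_2\ge 4$ and marginals $\bigl(Z,\frac{\idm_{d_2}}{d_2}\bigr)$. Here
\[
Z=\frac{3}{2d_2}\idm_2+\frac{d_2-3}{2d_2}\J .
\]
I would set $d=2$ and $m=d_2-2\ge 2$. Theorem~\ref{$(d+m,d)$} then uses
\[
Z_1=p\frac{\idm_2}{2}+(1-p)\frac{\J}{2},\qquad p=\frac{3}{d_2}.
\]
Expanding gives $Z_1=\frac{3}{2d_2}\idm_2+\frac{d_2-3}{2d_2}\J$, which is exactly $Z$. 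The theorem therefore gives an extreme point of Choi rank $d_2$.

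We still need $d_2=\lfloor\sqrt{4+d_2^2-1}\rfloor$. This holds because $d_2^2\le d_2^2+3<(d_2+1)^2$ whenever $d_2\ge 2$. It is also the case $d=2$ of the preceding corollary, since its condition reads $m>-1$. So Theorem~5.2 is recovered, and in fact extended to $d_2=3$.

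Theorem~5.8 of \cite{DEVENDRA} concerns $(3,4)$ with marginals $\bigl(\frac{\idm_3}{3},\frac{\idm_4}{4}\bigr)$. Taking $d=3$, $m=1$ gives $p=\frac{d+1}{d+m}=1$, so $Z_1=\frac{\idm_3}{3}$. The marginals therefore coincide exactly. The rank bound is $\lfloor\sqrt{9+16-1}\rfloor=\lfloor\sqrt{24}\rfloor=4=d+m$, which matches the corollary's condition $m>\frac{1}{2}$. This case is then immediate from Theorem~\ref{$(d+m,d)$}.

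The main obstacle is bookkeeping rather than mathematics. One has to make sure the exact hypotheses and marginals in \cite{DEVENDRA} agree with ours, possibly after a transpose ($Z^T=Z$ since it is real symmetric) or after exchanging the roles of the two parties. If their statements are phrased for $\mathcal{C}(\rho_1,\rho_2)$ rather than for CP maps, I would pass through Rudolph's correspondence. If their convention has the dimensions reversed, I would apply Observation~\ref{obs} to swap $\Phi$ and $\Phi^*$.

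If one of the cited theorems instead concerns a different marginal, for example a unitarily rotated version of $Z$, I would invoke Observation~\ref{D1,D2}. Its argument shows that conjugating the Kraus operators by unitaries preserves extremality and rank, so the two cases are equivalent.
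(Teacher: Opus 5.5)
Your proposal is correct and follows the paper's argument, only more explicitly. The paper simply specializes the $(d,d+m)$ theorem to $d=2$ for Theorem~5.2 and to $m=1$ for Theorem~5.8 of \cite{DEVENDRA}; your checks that $Z_1$ reduces to $Z$ (resp.\ to $\frac{\idm_d}{d}$, since $p=1$) and that $d+m=\lfloor\sqrt{d_1^2+d_2^2-1}\rfloor$ in the cases you treat fill in what the paper leaves implicit. One small adjustment: the paper leaves $d$ free when it sets $m=1$, and since $p=\frac{d+1}{d+1}=1$ for every $d$, the existence part of your argument does not need the restriction to $d=3$, which matters if Theorem~5.8 is stated for general $(d,d+1)$.
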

\begin{proof}
To prove Theorem 5.8 and Theorem 5.2 in \cite{DEVENDRA}, set m=1 and d=2, respectively, in (\ref{C.P.E}).

\end{proof}
\begin{prop}
Completely positive map definded in Eq.~(\ref{C.P.E}) is   entanglement breaking.
\end{prop}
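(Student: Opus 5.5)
The plan is to use the Horodecki characterization: a completely positive map is entanglement breaking if and only if its Choi matrix is separable. Equivalently, it suffices to write $\Phi$ as a sum of measure-and-prepare maps $X\mapsto \tr(F_t X)\,\sigma_t$ with $F_t,\sigma_t\geq 0$. The overall factor $\frac{1}{d(d+m)}$ in (\ref{C.P.E}) plays no role, so I will drop it. I split the Kraus sum in (\ref{C.P.E}) into the two families (\ref{K.r.1}) and (\ref{K.r.2}) and treat them separately; a sum of entanglement breaking maps is again entanglement breaking.

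\textbf{The rank-one family (\ref{K.r.2}).} For $i=d+2,\dots,d+m$ we have $V_i=\ket{\J}\bra{e^{d+m}_i}$, where $\ket{\J}=\sum_k\ket{e^d_k}$. Hence
\[
V_i^\dagger X V_i=\bra{\J}X\ket{\J}\,\ket{e_i^{d+m}}\bra{e_i^{d+m}},
\]
which is already of measure-and-prepare form. Its Choi matrix is a product operator, so this part is entanglement breaking with no further work.

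\textbf{The cyclic family (\ref{K.r.1}), $i=1,\dots,d+1$.} Write $\sigma_i(k)=\Mod(k+i-2,d+1)+1$, so that $V_i^\dagger=\sum_{k=1}^d\ket{e_{\sigma_i(k)}}\bra{e_k}$. The Choi matrix of $\Psi(X)=\sum_{i=1}^{d+1}V_i^\dagger XV_i$ is
\[
C_\Psi=\sum_{k,l=1}^{d}\ket{k}\bra{l}\otimes\sum_{i=1}^{d+1}\ket{\sigma_i(k)}\bra{\sigma_i(l)}.
\]
The key observation is that, for fixed $k$, the index $\sigma_i(k)$ runs over all of $\{1,\dots,d+1\}$ as $i$ runs over $1,\dots,d+1$. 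Moreover $\sigma_i(l)=\sigma_i(k)+(l-k)$ mod $d+1$. Consequently the inner sum equals a power of the cyclic shift, namely $S^{\,k-l}$ (or $S^{\,l-k}$, depending on orientation; I will fix this convention carefully), embedded in the first $d+1$ coordinates of $\mathbb{C}^{d+m}$.

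Next I diagonalize $S$ with the Fourier basis $\ket{f_t}$, $t=0,\dots,d$, of $\mathbb{C}^{d+1}$: $S^n=\sum_t\omega^{tn}\ket{f_t}\bra{f_t}$ with $\omega=e^{2\pi i/(d+1)}$. Substituting gives
\[
C_\Psi=\sum_{t=0}^{d}\Big(\sum_{k,l}\omega^{t(k-l)}\ket{k}\bra{l}\Big)\otimes\ket{f_t}\bra{f_t}=\sum_{t=0}^{d}\ket{g_t}\bra{g_t}\otimes\ket{f_t}\bra{f_t},
\qquad \ket{g_t}=\sum_{k=1}^d\omega^{tk}\ket{k},
\]
which is manifestly separable. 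Equivalently,
\[
\Psi(X)=\sum_t\bra{\bar g_t}X\ket{\bar g_t}\,\ket{f_t}\bra{f_t}
\]
(with conjugation conventions to be fixed), a measure-and-prepare map.

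Adding the two parts shows that the Choi matrix of $\Phi$ is a positive combination of product states, so $\Phi$ is entanglement breaking. The main obstacle is the bookkeeping in the cyclic family: getting the orientation of the shift and the complex conjugations right. The substance, however, lies in the fact that $i$ ranges over a full period $d+1$. This turns each block into an exact power of $S$ rather than a truncated one, and that is what makes the simultaneous Fourier diagonalization work.
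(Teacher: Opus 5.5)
Your argument is correct, but it reaches the conclusion by a different route from the paper. The paper shows the Choi state is PPT: its appendix lemma writes the cyclic block of $\rho^{T_1}$ as a column of powers of $S$ times a row of powers of $S$. It then notes $\rank\rho=d+m=d_2$ and invokes the low-rank theorem of Horodecki et al.\ that a PPT state of rank at most $d_2$ is separable. You instead diagonalize the same circulant structure $\sum_{k,l}\ket{k}\bra{l}\otimes S^{k-l}$ in the Fourier basis of $\mathbb{C}^{d+1}$. This gives the separable decomposition $\sum_{t=0}^{d}\ket{g_t}\bra{g_t}\otimes\ket{f_t}\bra{f_t}+\sum_{i=d+2}^{d+m}\J\otimes\ket{e_i}\bra{e_i}$ explicitly. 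Your observation that $i$ runs over a full period $d+1$, so each block is an exact power of the shift, is exactly the point needed. The orientation and conjugation conventions you left open only replace $\ket{g_t}$ by its complex conjugate, which does not affect separability. Your route is self-contained and constructive: it needs neither the Choi rank nor the hypotheses of the low-rank PPT theorem, and it gives the measure-and-prepare form directly. It also yields more. The decomposition has exactly $(d+1)+(m-1)=d+m$ rank-one product terms, so the entanglement-breaking rank is at most $d+m$. Since that rank is bounded below by the Kraus rank $d+m$, you recover the paper's next corollary (EB-rank $=d+m$) without the cited result of Bhat et al. The paper's route is shorter on the page once the external theorem is granted. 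Its PPT factorization is essentially the first half of your Fourier computation, which you carry through to an explicit separable form.
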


\begin{proof}
 Let $\rho$ be the Choi state corresponding to $(\ref{C.P.E})$. We  shows in Appendix in lemma \ref{$rho^T$} that
 $\rho^{T_1} \geq 0$. Since the Kraus rank coincides with the Choi rank, we have
 \[
 \rank(\rho)=d+m=d_2 .
 \]
 For states satisfying $\rank(\rho)\leq d_2$, the PPT condition is necessary and sufficient for separability \cite{Horodecki}. Therefore, the map defined in $(\ref{C.P.E})$ is entanglement breaking.
\end{proof}

\begin{cor}
$\Phi$	defined in (\ref{C.P.E}) have entanglement breaking rank $= d+m$.
\end{cor}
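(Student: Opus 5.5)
We need to show that the entanglement breaking rank of $\Phi$ in (\ref{C.P.E}) equals $d+m$. This rank is the minimal number of rank-one Kraus operators in a Kraus decomposition of $\Phi$; equivalently, it is the minimal number of product vectors $\ket{a_k}\otimes\ket{b_k}$ whose projectors sum to the Choi state $\rho$. The plan is to prove two matching inequalities.

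For the lower bound, any decomposition of $\rho$ into $r$ product rank-one terms has $\rank(\rho)\le r$. By Theorem~\ref{$(d+m,d)$} and the preceding proposition, $\rank(\rho)=d+m$, since the Choi rank equals the number $d+m$ of linearly independent Kraus operators $V_i$. Hence the entanglement breaking rank is at least $d+m$.

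For the upper bound, I would avoid any general separability decomposition and instead exhibit $d+m$ rank-one Kraus operators directly. The key observation is that every $V_i$ in (\ref{K.r.1})--(\ref{K.r.2}) is already rank one:
\begin{itemize}
\item For $i\ge d+2$, $V_i=\ket{\J}\bra{e^{d+m}_i}$ is visibly rank one.
\item For $i\le d+1$, $V_i=\sum_{k=1}^d\ket{e^d_k}\bra{e^{d+m}_{\sigma_i(k)}}$ is a partial permutation matrix, which has rank $d$. So this term needs care.
\end{itemize}

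Since the $V_i$ with $i\le d+1$ are not rank one, I would instead use the preceding proposition. $\rho$ is separable with $\rank(\rho)=d_2=d+m$ and $\rho^{T_1}\ge0$. Horodecki et al.~\cite{Horodecki} prove more than separability for such states: a PPT state of rank at most $\max(d_1,d_2)$ is a sum of exactly $\rank(\rho)$ product projectors. Their proof reduces the rank by subtracting one product vector at a time, preserving PPT, and the resulting decomposition has $\rank(\rho)$ terms. Applying this with $\rank(\rho)=d+m$ gives $\rho=\sum_{k=1}^{d+m}\ket{a_k b_k}\bra{a_k b_k}$. This corresponds to $d+m$ rank-one Kraus operators, so the entanglement breaking rank is at most $d+m$.

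The main obstacle is justifying that the cited result delivers exactly $\rank(\rho)$ product terms rather than just separability. I would quote the precise statement from \cite{Horodecki}, namely that a PPT state supported on $\mathbb{C}^{d_1}\otimes\mathbb{C}^{d_2}$ with rank $\le d_2$ is a convex combination of $\rank(\rho)$ pure product states. If that citation proves insufficient, the fallback is an explicit check. I would exploit the cyclic structure of the $V_i$, $i\le d+1$, by a Fourier transform over $\mathbb{Z}_{d+1}$ on the index $i$. Recombining the Kraus operators $V_1,\dots,V_{d+1}$ unitarily should produce $d+1$ rank-one operators of the form $\ket{u_\omega}\bra{f_\omega}$ with $\omega$ ranging over $(d+1)$-th roots of unity, possibly after absorbing part of the $\J$ terms. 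This would yield the explicit bound of $d+m$.
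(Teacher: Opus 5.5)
Your main line is the paper's argument. The paper quotes Theorem~5.2 of \cite{Bhat}: a unital entanglement breaking map whose Kraus rank equals $d_2$ has EB-rank $d_2$. This is the channel-level counterpart of the minimal product decomposition of \cite{Horodecki} that you invoke, and that citation does deliver exactly $\rank(\rho)$ product terms. Your lower bound EB-rank$(\Phi)\ge\rank(\rho)=d+m$ is the trivial half, left implicit in the paper.

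You do need to state the support hypothesis. Phrased as ``a PPT state of rank at most $\max(d_1,d_2)$,'' the claim is false. A rank-$4$ PPT entangled $3\times 3$ state built from a UPB, embedded in $3\times 4$, is a counterexample. What is needed is that the marginal on the $d_2$-dimensional factor have rank $d_2$. This holds here because $\Phi(\idm_d)=\idm_{d+m}/(d+m)$, which is exactly what ``unital'' supplies in the paper's citation.

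Your fallback is a genuinely different and entirely elementary route, and it works exactly, without touching the $\J$ terms. Let $S$ be the cyclic shift $e^{d+m}_k\mapsto e^{d+m}_{\Mod(k,d+1)+1}$ on $\mathrm{span}\{e^{d+m}_1,\dots,e^{d+m}_{d+1}\}$, and let $\iota$ be the isometry $e^d_k\mapsto e^{d+m}_k$. Then $V_i^\dagger=S^{i-1}\iota$ for $i\le d+1$. Take $\omega$ a primitive $(d+1)$-th root of unity and $f_j=(d+1)^{-1/2}\sum_{k=1}^{d+1}\omega^{-jk}e^{d+m}_k$. Then $S=\sum_{j=0}^{d}\omega^j\ket{f_j}\bra{f_j}$ on that span, and $\sum_{i=0}^{d}\omega^{i(j-l)}=(d+1)\delta_{jl}$ turns the cyclic twirl into a pinching:
\[
\sum_{i=1}^{d+1}V_i^\dagger XV_i=(d+1)\sum_{j=0}^{d}\ket{f_j}\bra{f_j}\iota X\iota^\dagger\ket{f_j}\bra{f_j}.
\]
So these $d+1$ terms are realized by the rank-one Kraus operators $\sqrt{d+1}\,\ket{f_j}\bra{\iota^\dagger f_j}$. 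Equivalently, with $h_j=\sum_{r=1}^{d}\omega^{jr}e^d_r$,
\[
\rho=\frac{1}{d(d+m)}\Bigl(\sum_{j=0}^{d}\ket{h_j}\bra{h_j}\otimes\ket{f_j}\bra{f_j}+\sum_{i=d+2}^{d+m}\ket{\J}\bra{\J}\otimes\ket{e^{d+m}_i}\bra{e^{d+m}_i}\Bigr),
\]
which has exactly $d+m$ product terms.

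The citation route is shorter but rests on external low-rank separability theorems plus the appendix PPT computation. The explicit route is self-contained and exhibits a minimal product decomposition. It also proves directly that $\Phi$ is entanglement breaking, which makes the PPT lemma and the appeal to \cite{Horodecki} unnecessary for both the preceding proposition and this corollary.
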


\begin{proof}
Since the completely positive map defined in (\ref{C.P.E}) is unital and entanglement breaking with Kraus rank $d+m=d_2$, it follows from Theorem 5.2 in \cite{Bhat}   that \[EB-\rank(\Phi)=d+m\]
\end{proof}

\section{Extremal Marginal state in $(6k,6k, k\geq 1)$}
\begin{thm}[Theorem 2.6, \cite{Ohno}]\label{rank-4}
Let $\Phi:\MM_3\to\MM_3$ be a CP map defined by 
\begin{align}\label{CP-extreme-rank-4-2}
	\Phi(X)=\frac{1}{4}\sum_{j=1}^{4}B_jXB_j^\dagger
\end{align}
for all $X\in\MM_{3}$, where
\small
\begin{equation}\label{eq-extr-rank-4}
	\begin{cases}
		B_1&=E_{11}, \\
		B_2&=E_{12}+\sqrt{2}E_{23},\\
		B_3&=\sqrt{2}E_{21}+\sqrt{3}E_{32},\\
		B_4&=E_{31}+\sqrt{2}E_{13}.  
	\end{cases}
\end{equation}
\normalsize
Then $\Phi$ is an extreme point of the set $\mathcal{CP}(\MM_3,\MM_3; \frac{\idm_{3}}{3},\frac{\idm_{3}}{3})$ with $CR(\Phi)=4$.
\end{thm}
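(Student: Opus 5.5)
The plan is to verify directly the three conditions of Theorem~\ref{Ex} for the Kraus operators $K_j=cB_j$, with $c$ the normalizing constant. Linear independence in condition (iii) is insensitive to scaling, so only conditions (i) and (ii) see $c$. Once extremality is established, the Choi rank equals the Kraus rank, and this is $4$ provided $B_1,\dots,B_4$ are linearly independent. That is immediate, because the $B_j$ are supported on pairwise disjoint sets of matrix units.

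\textbf{Marginals.} Every $B_j$ is a sum of at most two matrix units $E_{ab}$. In each $B_j$ these units have distinct rows and distinct columns, so all cross terms in $B_j^\dagger B_j$ and $B_jB_j^\dagger$ vanish. A direct computation gives
\begin{equation*}
\sum_{j=1}^4 B_j^\dagger B_j=(1+2+1)E_{11}+(1+3)E_{22}+(2+2)E_{33}=4\,\idm_3,
\end{equation*}
and similarly $\sum_j B_jB_j^\dagger=4\,\idm_3$. Therefore both marginals equal $\frac{\idm_3}{3}$ once the prefactor is taken to be $\frac1{12}$; I will note that the $\frac14$ in (\ref{CP-extreme-rank-4-2}) must be read up to this normalization. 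This settles (i) and (ii).

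\textbf{Linear independence of the 16 block matrices.} I will exploit a grading. Assign to $E_{ab}$ the weight $b-a \pmod 3$. Then $B_1$ has weight $0$, $B_2$ has weight $1$, $B_3$ has weight $2$, and $B_4$ is mixed, with $E_{31}$ of weight $1$ and $E_{13}$ of weight $2$. Each $M_{ij}=B_i^\dagger B_j\oplus B_jB_i^\dagger$ is then a short explicit sum of matrix units in the two blocks. I will tabulate the supports of all sixteen $M_{ij}$ in the $18$-dimensional space of block-diagonal matrices.

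For $i\neq j$, I expect most $M_{ij}$ to have a support containing a matrix unit that appears in no other $M_{kl}$. Any such $M_{ij}$ can be peeled off, since its coefficient in a vanishing linear combination must be zero. The remaining off-diagonal ones should split into small groups with overlapping supports, probably pairs $M_{ij},M_{ji}$ or pairs involving $B_4$. Within each group, independence reduces to a $2\times2$ or $3\times3$ determinant of the coefficients $1,\sqrt2,\sqrt3$.

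The diagonal terms $M_{jj}$ lie in the $6$-dimensional space of diagonal blocks. By the computation above, they are the vectors
\begin{equation*}
(1,0,0\,|\,1,0,0),\quad (0,1,2\,|\,1,2,0),\quad (2,3,0\,|\,0,2,3),\quad (1,0,2\,|\,2,0,1).
\end{equation*}
Independence of these four can be checked by exhibiting a nonzero $4\times4$ minor.

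\textbf{Main obstacle.} The hard step is the bookkeeping for the off-diagonal products involving $B_4$. Because $B_4$ mixes two weights, products such as $B_4^\dagger B_2$ and $B_3^\dagger B_4$ can land on the same matrix units as other products. My first attempt is to check whether every pair with overlapping support is still separated by the second block $B_jB_i^\dagger$. If some dependence survives that check, I will fall back on computing the $16\times16$ Gram matrix of the vectorized $M_{ij}$, as was done for $M=AA^T$ in Theorem~\ref{$(d+m,d)$}, and show that it is nonsingular.
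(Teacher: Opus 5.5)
The paper gives no proof of this statement; it is simply quoted from Ohno. So there is no argument in the paper to compare yours with. Your plan of checking the three conditions of Theorem~\ref{Ex} directly gives a self-contained proof, and it goes through. Your normalization remark is also correct: with the prefactor $\frac14$ both marginals equal $\idm_3$, so the marginals $\frac{\idm_3}{3}$ require $\frac1{12}$. This does not affect extremality.

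The bookkeeping you postponed does close, though not in the shape you predicted. Write $M_{ij}=B_i^\dagger B_j\oplus B_jB_i^\dagger$. The twelve off-diagonal ones are
\[
M_{12}=E_{12}\oplus 0,\quad M_{21}=E_{21}\oplus 0,\quad M_{13}=0\oplus\sqrt2E_{21},\quad M_{31}=0\oplus\sqrt2E_{12},
\]
\[
M_{24}=\sqrt2E_{23}\oplus 2E_{12},\quad M_{42}=\sqrt2E_{32}\oplus 2E_{21},\quad M_{34}=\sqrt3E_{21}\oplus\sqrt2E_{32},\quad M_{43}=\sqrt3E_{12}\oplus\sqrt2E_{23},
\]
\[
M_{14}=\sqrt2E_{13}\oplus E_{31},\quad M_{41}=\sqrt2E_{31}\oplus E_{13},\quad M_{23}=2E_{31}\oplus\sqrt3E_{31},\quad M_{32}=2E_{13}\oplus\sqrt3E_{13}.
\]
None of these has a diagonal entry, so the problem splits into two parts:
\begin{itemize}
\item The diagonal part consists of your four vectors. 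Columns $1,2,3,6$ give the minor $-12$, so they are independent.
\item The off-diagonal part consists of the twelve matrices above.
\end{itemize}

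In the off-diagonal part, only $M_{24},M_{42},M_{34},M_{43}$ own a private matrix unit at first. Removing them frees $M_{31},M_{13},M_{12},M_{21}$, so the peeling must be iterated. What remains is not a union of pairs. It is a single $4$-cycle $M_{14},M_{32},M_{41},M_{23}$ on the units $E_{13},E_{31}$ of both blocks. Its coefficient determinant is, up to sign,
\[
(\sqrt2\cdot\sqrt3)^2-(2\cdot 1)^2=2\neq 0 .
\]

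Any two members of this cycle are already linearly independent. So your proposed first check, whether each overlapping pair is separated by the second block, would pass without proving anything. The conclusion genuinely rests on $\sqrt6\neq 2$, which you get only from the $4\times4$ determinant or from your Gram-matrix fallback.

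The Choi-rank claim is fine as you argued it, since the $B_j$ have pairwise disjoint supports.
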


\begin{thm}[Theorem 2.8, \cite{Ohno}]\label{CP-extrem-rank-d}
	Given $d\geq 3$, define the map $\Phi(X):\MM_{d}\to\MM_{d}$ by
	\begin{align}\label{extreme-rank-d}
		\Phi=\sum_{j=1}^{d}V_jXV_j^\dagger,
	\end{align}
	where
	\begin{equation}\label{eq-extreme-rank-lowerbound}
		\begin{cases}
			V_1&=\sqrt{\frac{d-2}{d-1}}\sum\limits_{j=2}^{d}E_{jj},\\
			V_k&=\frac{1}{\sqrt{d-1}}(E_{1k}+E_{k1}), \text{ for all } 2\leq k\leq d.
		\end{cases}
	\end{equation}
	Then $\Phi\in\mathcal{CP}(\MM_d,\MM_d; I_{d}, I_{d})$ is an extreme point of the set $\mathcal{CP}(\MM{d},\MM{d}; \idm_{d})$ with Choi $\rank\Phi=d$
\end{thm}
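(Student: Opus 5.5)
The statement is Ohno's Theorem 2.8, so I will verify it directly with Theorem~\ref{Ex}. Take $V_1=\sqrt{\tfrac{d-2}{d-1}}\,P$ with $P=\sum_{j\ge2}E_{jj}$, and $V_k=\tfrac{1}{\sqrt{d-1}}(E_{1k}+E_{k1})$ for $2\le k\le d$. All of these are Hermitian. The plan has three parts: check the two marginal conditions, read off the Choi rank, and prove the linear independence condition (iii).

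\textbf{Marginals.} Since each $V_j$ is Hermitian, $\sum_j V_jV_j^\dagger=\sum_j V_j^\dagger V_j=\sum_j V_j^2$. We have $V_1^2=\tfrac{d-2}{d-1}P$ and $(E_{1k}+E_{k1})^2=E_{11}+E_{kk}$. Summing over $k=2,\dots,d$ gives $\tfrac{1}{d-1}\big((d-1)E_{11}+P\big)$. Adding $\tfrac{d-2}{d-1}P$ yields $E_{11}+P=I_d$, so both conditions (i) and (ii) hold.

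\textbf{Choi rank.} The $V_j$ have disjoint supports: $V_1$ is diagonal on $\{2,\dots,d\}$, while $V_k$ is supported on the entries $(1,k)$ and $(k,1)$. Hence they are linearly independent, and the Choi rank is $d$.

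\textbf{Condition (iii).} I must show that the $d^2$ block matrices $N_{ij}=V_i^\dagger V_j\oplus V_jV_i^\dagger$ are linearly independent. First I compute the products.
\begin{itemize}
\item $V_1V_1=cP$, where $c=\tfrac{d-2}{d-1}$.
\item $V_1V_k=\tfrac{\sqrt c}{\sqrt{d-1}}\,E_{k1}$ and $V_kV_1=\tfrac{\sqrt c}{\sqrt{d-1}}\,E_{1k}$.
\item $V_kV_l=\tfrac{1}{d-1}(E_{kl}+\delta_{kl}E_{11})$ for $k,l\ge2$.
\end{itemize}
These give the blocks:
\begin{itemize}
\item $N_{11}\propto P\oplus P$;
\item $N_{1k}\propto E_{k1}\oplus E_{1k}$ and $N_{k1}\propto E_{1k}\oplus E_{k1}$;
\item for $k\ne l$, $N_{kl}\propto E_{kl}\oplus E_{lk}$;
\item $N_{kk}\propto (E_{kk}+E_{11})\oplus(E_{kk}+E_{11})$.
\end{itemize}

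Suppose $\sum c_{ij}N_{ij}=0$. Each off-diagonal matrix unit is carried by only one $N_{ij}$:
\begin{itemize}
\item $E_{k1}$ in the first block occurs only in $N_{1k}$, so $c_{1k}=0$;
\item $E_{1k}$ in the first block occurs only in $N_{k1}$, so $c_{k1}=0$;
\item $E_{kl}$ with $k\ne l\ge2$ in the first block occurs only in $N_{kl}$, so $c_{kl}=0$.
\end{itemize}
What remains is $c_{11}P+\sum_k c_{kk}(E_{kk}+E_{11})=0$ in each block. The $E_{11}$ coefficient gives $\sum_k c_{kk}=0$. Each $E_{kk}$ coefficient gives $c_{11}+c_{kk}=0$, so $c_{kk}=-c_{11}$ for every $k$. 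Then $0=\sum_k c_{kk}=-(d-1)c_{11}$, hence $c_{11}=0$, and therefore all $c_{kk}=0$. This proves linear independence, and Theorem~\ref{Ex} gives extremality.

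\textbf{Main obstacle.} The only delicate point is the diagonal subsystem, because $P$ and the $E_{kk}+E_{11}$ overlap. Also, $d\ge3$ is needed so that $c\ne0$; otherwise $V_1=0$ and the Choi rank drops.
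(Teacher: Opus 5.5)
Your computations are all correct. The Kraus operators sum to $I_d$ on both sides. They have disjoint supports and are nonzero because $d\ge 3$, so the Choi rank is $d$. The products $V_iV_j$ are exactly as you list.

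\textbf{The gap is in the final step.} Theorem~\ref{Ex} characterizes extreme points of the two-marginal set $\mathcal{CP}(\MM_d,\MM_d;I_d,I_d)$. The statement, however, asserts extremality in $\mathcal{CP}(\MM_d,\MM_d;\idm_d)$, where only one marginal is fixed. That is a larger convex set; compare the inclusion $\mathcal{CP}[\MM_d,\MM_d;A_1,A_2]\subseteq\mathcal{CP}[\MM_d,\MM_d;A_2]$ in Theorem~\ref{CP-tensor-extreme}. A point that is extreme in a convex subset need not be extreme in the superset. The difference matters here: one-sided extremality is exactly the hypothesis on $\Phi$ in Theorem~\ref{CP-tensor-extreme}, which is how the paper uses this result in the $6k$ construction. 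As written, you have proved only the weaker two-marginal statement.

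\textbf{The repair is already in your argument.} You never used the second blocks $V_jV_i^\dagger$; every coefficient $c_{ij}$ was forced to vanish by the first blocks alone. So you have shown that the $d^2$ matrices $V_iV_j$ are linearly independent, hence a basis of $\MM_d$. Since every $V_i$ is Hermitian, $\{V_i^\dagger V_j\}=\{V_iV_j^\dagger\}=\{V_iV_j\}$.

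Now apply Choi's criterion in place of Theorem~\ref{Ex}. A CP map with Kraus operators $W_i$ is extreme among CP maps with $\sum_i W_i^\dagger W_i$ fixed (resp.\ $\sum_i W_iW_i^\dagger$ fixed) iff $\{W_i^\dagger W_j\}$ (resp.\ $\{W_iW_j^\dagger\}$) is linearly independent. This gives extremality in the one-sided set, whichever marginal is meant. Extremality in $\mathcal{CP}(\MM_d,\MM_d;I_d,I_d)$ then follows automatically.

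The paper itself gives no proof of this statement; it simply quotes Ohno. With the criterion corrected, your direct verification is a complete, self-contained proof.
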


  \begin{thm}[Theorem 3.6, \cite{DEVENDRA}\label{CP-tensor-extreme}]
	Let $\Phi\in\mathcal{CP}[\MM_d,\MM_d; A_1, A_2]\subseteq \mathcal{CP}[\MM_{d},\MM_{d}; A_2]$, which has a minimal Kraus decomposition consisting of Hermitian Kraus operators. Let $\Psi\in\mathcal{CP}[\MM_{d_1},\MM_{d_2}; B_1, B_2]$. If $\Phi$ and $\Psi$ are both extreme points of the sets $\mathcal{CP}[\MM_{d},\MM_{d}; A_2]$ and $\mathcal{CP}[\MM_{d_1},\MM_{d_2}; B_1, B_2]$, respectively, then the tensor product $\Phi\otimes \Psi$ is also an extreme point of the set $\mathcal{CP}[\MM_{d}\otimes\MM_{d_1},\MM_{d}\otimes\MM_{d_2}; A_1\otimes B_1, A_2\otimes B_2]$.
\end{thm}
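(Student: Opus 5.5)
Fix minimal Kraus decompositions $\Phi(X)=\sum_{i=1}^{r}K_iXK_i^\dagger$ with $K_i=K_i^\dagger$ and $\Psi(Y)=\sum_{a=1}^{s}L_aYL_a^\dagger$, and let $\Phi\otimes\Psi$ have the Kraus operators $K_i\otimes L_a$. The marginal conditions (i) and (ii) of Theorem~\ref{Ex} are immediate from multiplicativity of the tensor product:
\[\sum_{i,a}(K_i\otimes L_a)^\dagger(K_i\otimes L_a)=\Big(\sum_i K_i^\dagger K_i\Big)\otimes\Big(\sum_a L_a^\dagger L_a\Big)=A_1\otimes B_1,\]
and in the same way $\sum_{i,a}(K_i\otimes L_a)(K_i\otimes L_a)^\dagger=A_2\otimes B_2$. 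So everything reduces to the linear independence condition (iii).

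For the extremality of $\Phi$ in the one-sided set $\mathcal{CP}[\MM_d,\MM_d;A_2]$ I use the Choi-type criterion: the products $\{K_jK_i^\dagger\}_{i,j}$ are linearly independent. Since the $K_i$ are Hermitian, this is the family $\{K_jK_i\}_{i,j}$, and taking adjoints shows that $\{K_i^\dagger K_j\}_{i,j}=\{K_iK_j\}_{i,j}$ is also linearly independent. For $\Psi$, Theorem~\ref{Ex} gives linear independence of the block matrices $N_{a,b}=L_a^\dagger L_b\oplus L_bL_a^\dagger$.

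Now suppose
\[\sum_{i,j,a,b}c_{ij,ab}\Big((K_iK_j\otimes L_a^\dagger L_b)\oplus(K_jK_i\otimes L_bL_a^\dagger)\Big)=0.\]
Expand both blocks in the bases of matrix units $e_\alpha\otimes f_\beta$. The key step is to rewrite the first block as $\sum_{i,j}K_iK_j\otimes P_{ij}$ and the second as $\sum_{i,j}K_jK_i\otimes Q_{ij}$, where $P_{ij}=\sum_{a,b}c_{ij,ab}L_a^\dagger L_b$ and $Q_{ij}=\sum_{a,b}c_{ij,ab}L_bL_a^\dagger$. By the linear independence of $\{K_iK_j\}$, the vanishing of the first block forces $P_{ij}=0$ for every pair $(i,j)$. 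Similarly, the linear independence of $\{K_jK_i\}$, which is the same family re-indexed, forces $Q_{ij}=0$.

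Hence for each fixed $(i,j)$ we get $\sum_{a,b}c_{ij,ab}N_{a,b}=0$, and the linear independence of the $N_{a,b}$ gives $c_{ij,ab}=0$. This proves (iii), and Theorem~\ref{Ex} then yields extremality of $\Phi\otimes\Psi$.

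The main obstacle is the coupling between the two blocks. Both blocks share the same coefficients, so one cannot separate $\Phi$-data from $\Psi$-data naively. The Hermiticity assumption is exactly what makes the $\Phi$-factors in the two blocks come from one linearly independent family. With that in hand, the decoupling above works blockwise: vanishing of a tensor sum $\sum_k X_k\otimes Y_k$ with linearly independent $X_k$ forces $Y_k=0$.
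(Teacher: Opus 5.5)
Your proof is correct. The paper gives no proof of this statement; it is quoted from \cite{DEVENDRA}, so there is no in-paper argument to compare against.

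What you give is the natural direct verification through Theorem~\ref{Ex}. Hermiticity makes $\{K_i^\dagger K_j\}_{i,j}$ and $\{K_jK_i^\dagger\}_{i,j}$ the same linearly independent family $\{K_iK_j\}$, up to the relabelling $(i,j)\mapsto(j,i)$. That lets you remove the $\Phi$-factor from both blocks at once and then use the linear independence of the $\Psi$-blocks.

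One small point should be stated explicitly. Theorem~\ref{Ex} is existential, so you should take $\{L_a\}$ to be the decomposition it supplies; that decomposition is automatically minimal, since linear independence of the blocks forces the $L_a$ to be linearly independent. Alternatively, note that linear independence of $\{L_a^\dagger L_b\oplus L_bL_a^\dagger\}$ does not depend on which minimal decomposition you choose. Two minimal decompositions differ by a unitary $u$, and the two block families are then related by the invertible transformation $\bar u\otimes u$.

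Similarly, Choi's criterion for the one-sided set holds for arbitrary $A_2$ by the usual Radon--Nikodym argument, and it is correctly applied to the given minimal Hermitian decomposition.

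As a by-product, your argument shows that the $K_i\otimes L_a$ are linearly independent. Hence the Choi rank of $\Phi\otimes\Psi$ is the product of the Choi ranks, which is the fact the paper uses without justification in Theorem~\ref{rank-8}.
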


\begin{thm}
	There exists an extreme point of the set $\mathcal{CP}(\MM_2,\MM_2;\sigma,\sigma)$ of Kraus rank 2 where 
	\[\sigma=\begin{pmatrix}
		\frac{1}{3} & 0\\
		0 & \frac{2}{3}
	\end{pmatrix}.\]	
\end{thm}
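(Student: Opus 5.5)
The plan is to apply Theorem~\ref{Ex} with an explicit pair of $2\times 2$ Kraus operators. Since $\sigma$ is real diagonal, $\sigma^T=\sigma$, so we need $K_1,K_2$ with
\[
K_1^\dagger K_1+K_2^\dagger K_2=\sigma,\qquad K_1K_1^\dagger+K_2K_2^\dagger=\sigma,
\]
together with linear independence of the four block matrices $M_{i,j}=\begin{pmatrix}K_i^\dagger K_j&0\\0&K_jK_i^\dagger\end{pmatrix}$, $i,j\in\{1,2\}$. Following the spirit of the examples of Theorem~\ref{rank-4} and Theorem~\ref{CP-extrem-rank-d}, I would look for one diagonal operator and one symmetric off-diagonal operator:
\[
K_1=aE_{11}+bE_{22},\qquad K_2=c\,(E_{12}+E_{21}),
\]
with $a,b,c>0$ real.

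\textbf{Marginal conditions.} Both marginal equations reduce to the same diagonal system, because $K_1$ is diagonal and $K_2^\dagger K_2=K_2K_2^\dagger=c^2\idm_2$:
\[
a^2+c^2=\tfrac13,\qquad b^2+c^2=\tfrac23 .
\]
A convenient solution is $a^2=c^2=\tfrac16$ and $b^2=\tfrac12$. The essential feature is $a\neq b$, which is forced anyway by $\tfrac13\neq\tfrac23$. This verifies conditions (i) and (ii) of Theorem~\ref{Ex}.

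\textbf{Linear independence.} This is the only step needing any care, and I expect it to be short. A direct computation gives
\[
M_{1,1}=\mathrm{diag}(a^2,b^2,a^2,b^2),\qquad M_{2,2}=c^2\,\idm_4 .
\]
For the off-diagonal pairs,
\[
M_{1,2}=c\big[(aE_{12}+bE_{21})\oplus(bE_{12}+aE_{21})\big],\qquad
M_{2,1}=c\big[(bE_{12}+aE_{21})\oplus(aE_{12}+bE_{21})\big].
\]
The argument then splits by support. The diagonal pair $M_{1,1},M_{2,2}$ and the anti-diagonal pair $M_{1,2},M_{2,1}$ are supported on disjoint sets of matrix entries, so it suffices to check each pair separately. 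Within each pair, independence is equivalent to a $2\times2$ determinant being nonzero:
\[
\det\begin{pmatrix}a^2&b^2\\ c^2&c^2\end{pmatrix}=c^2(a^2-b^2),\qquad
\det\begin{pmatrix}a&b\\ b&a\end{pmatrix}=a^2-b^2 .
\]
Both determinants are nonzero precisely because $a\neq b$. Hence condition (iii) holds, and $\Phi$ is extremal in $\mathcal{CP}(\MM_2,\MM_2;\sigma,\sigma)$.

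\textbf{Kraus rank.} Finally, $K_1$ is diagonal and $K_2$ is off-diagonal, so they are linearly independent. Therefore the Kraus (Choi) rank of $\Phi(X)=K_1XK_1^\dagger+K_2XK_2^\dagger$ is exactly $2$, which completes the argument.
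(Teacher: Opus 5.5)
Your proof is correct and uses exactly the paper's construction: the paper's $A_1=\mathrm{diag}(1/\sqrt3,1)$ and $A_2=\frac{1}{\sqrt3}(E_{12}+E_{21})$, with prefactor $\frac12$, are your $K_i=A_i/\sqrt2$ with $a^2=c^2=\frac16$, $b^2=\frac12$. The only difference is the criterion: the paper checks linear independence of the products $A_i^\dagger A_j$ alone, which gives extremality in the larger one-marginal set $\mathcal{CP}(\MM_2,\MM_2;\sigma)$ (and hence in the two-marginal subset, since Hermitian Kraus operators make both marginals equal), and this stronger form is what is needed when the map is reused for Theorem~\ref{rank-8}; you instead verify the two-sided block condition of Theorem~\ref{Ex}, which is exactly enough for the statement, and your determinants $c^2(a^2-b^2)$ and $a^2-b^2$, applied to the upper-left blocks $K_i^\dagger K_j$ alone, prove the stronger version as well.
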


\begin{proof}

	Let  $\Phi:\MM_2\to \MM_2$ be a CP map defined by
	
	\begin{equation}\label{CP-rank-2}
	\Phi(X)=\frac{1}{2}\sum_{i=1}^2 A_iXA_i^\dagger
	\end{equation}
	for all $X\in \MM_2$, where

	\begin{equation}\label{eq-rank2-Herm}
		A_1=\begin{pmatrix}
			\frac{1}{\sqrt{3}} & 0 \\
			0 & 1 
		\end{pmatrix},
		\quad A_2=\begin{pmatrix}
			0 & \frac{1}{\sqrt{3}} \\
			\frac{1}{\sqrt{3}} & 0
		\end{pmatrix}
	\end{equation}
	
	The pairs of sets $\{A_i^\dagger A_j: 1\leq i\leq j\leq 2\}$ are LI and $\sum A_i^\dagger A_i=\frac{\id}{2}\sum_{i=1}^2A_i^\dagger A_i=\sigma $. So $\Phi$ is extremal in $\mathcal{CP}(\MM_2,\MM_2,\sigma).$
\end{proof}
\begin{thm}\label{rank-8}
There exists an extreme point of the set $\mathcal{CP}(\MM_{6},\MM_{6}; D, D)$ with Choi-rank $8$
where $D=\sigma\otimes \frac{\idm_3}{3}$.
\end{thm}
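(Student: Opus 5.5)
Take the tensor product $\Phi\otimes\Psi$, where $\Phi$ is the rank-$2$ map of the preceding theorem on $\MM_2$ with marginals $(\sigma,\sigma)$ and $\Psi$ is Ohno's rank-$4$ extreme point of $\mathcal{CP}(\MM_3,\MM_3;\frac{\idm_3}{3},\frac{\idm_3}{3})$ from Theorem~\ref{rank-4}, and invoke Theorem~\ref{CP-tensor-extreme}. Since $\sigma\otimes\frac{\idm_3}{3}=D$, the product map lies in $\mathcal{CP}(\MM_6,\MM_6;D,D)$. Its Kraus operators are $\frac{1}{\sqrt{2}}A_i\otimes\frac{1}{2}B_j$ for $i=1,2$, $j=1,\dots,4$. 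The Choi matrix of a tensor product is (up to a permutation of factors) the tensor product of the Choi matrices, so the Choi rank is $2\cdot 4=8$. Note that $8=\lfloor\sqrt{36+36-1}\rfloor=\lfloor\sqrt{71}\rfloor$, which is the Parthasarathy bound.

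The steps, in order, are as follows.
\begin{enumerate}[label=(\arabic*)]
\item Check that $\Phi$ satisfies the first hypothesis of Theorem~\ref{CP-tensor-extreme}. Its minimal Kraus decomposition $\{A_1,A_2\}$ consists of Hermitian matrices, which is clear from \eqref{eq-rank2-Herm}. Since the $A_i$ are Hermitian, $\sum_i A_iA_i^\dagger=\sum_iA_i^\dagger A_i$, so both marginals equal $\sigma$ after normalisation. Hence $\Phi\in\mathcal{CP}[\MM_2,\MM_2;\sigma,\sigma]\subseteq\mathcal{CP}[\MM_2,\MM_2;\sigma]$.
\item Confirm that $\Phi$ is extreme in the one-sided set $\mathcal{CP}[\MM_2,\MM_2;\sigma]$. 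This is what the preceding theorem actually proves: it uses Choi's criterion that $\{A_i^\dagger A_j\}_{i,j}$ is linearly independent. Because $A_1^\dagger A_2$ and $A_2^\dagger A_1$ are adjoints of each other, I would verify all four products $A_1^2$, $A_1A_2$, $A_2A_1$, $A_2^2$ explicitly rather than only those with $i\le j$.
\item Take $\Psi$ from Theorem~\ref{rank-4}. It is extreme in $\mathcal{CP}[\MM_3,\MM_3;\frac{\idm_3}{3},\frac{\idm_3}{3}]$, so it satisfies the second hypothesis.
\item Apply Theorem~\ref{CP-tensor-extreme} to conclude that $\Phi\otimes\Psi$ is an extreme point of $\mathcal{CP}[\MM_6,\MM_6;\sigma\otimes\frac{\idm_3}{3},\sigma\otimes\frac{\idm_3}{3}]=\mathcal{CP}(\MM_6,\MM_6;D,D)$.
\item Compute the Choi rank. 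The eight operators $A_i\otimes B_j$ are linearly independent, because $\{A_i\}$ and $\{B_j\}$ are each linearly independent and tensor products of independent families remain independent. So the Kraus decomposition is minimal and the Choi rank is $8$.
\end{enumerate}

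The main obstacle is step (2), together with normalisation bookkeeping. First, the four products must be checked carefully: $A_1^2=\mathrm{diag}(\frac13,1)$ and $A_2^2=\frac13\idm_2$ are both diagonal, so $A_1^2$ and $A_2^2$ are independent only because $A_1^2$ is not a multiple of the identity. The off-diagonal products $A_1A_2$ and $A_2A_1$ are $\frac{1}{\sqrt3}\begin{pmatrix}0&\frac{1}{\sqrt3}\\1&0\end{pmatrix}$ and its transpose, which are independent of each other and of the diagonal ones. Second, the factor $\frac12$ must be tracked so that $\frac12(A_1^2+A_2^2)$ gives the stated $\sigma$; the trace normalisation of $\sigma$ and of $D$ should be checked against the paper's convention. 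If the normalisation fails, I would rescale the $A_i$ by a common constant, which affects neither extremality nor rank. Finally, I would double-check that Theorem~\ref{CP-tensor-extreme} places the Hermitian map in the first tensor slot, consistent with $D=\sigma\otimes\frac{\idm_3}{3}$.
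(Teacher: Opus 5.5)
Your proposal is correct and matches the paper's proof: you tensor the Hermitian rank-$2$ extreme map on $\MM_2$ with Ohno's rank-$4$ map, invoke the tensor-product extremality theorem, and get Choi rank $2\cdot 4=8$. Your extra checks are valid refinements of points the paper glosses over: all four products $A_iA_j$ (the paper lists only $i\le j$), linear independence of the eight operators $A_i\otimes B_j$, and the overall scaling, which must be $\frac{1}{24}$ rather than the paper's $\frac18$ to land exactly on $D$, since $\frac14\sum_j B_j^\dagger B_j=\idm_3$.
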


\begin{proof}
Consider the CP maps $\Phi_1:\MM_2\to \MM_2$ and  $\Phi_2:\MM_3\to \MM_3$ given by equations $(\ref{CP-rank-2})$ and (\refeq{CP-extreme-rank-4-2}) respectively
\[	\Phi_1(X)=\frac{1}{2}\sum_{i=1}^2 A_iXA_i^\dagger\]
\[	\Phi_2(X)=\frac{1}{4}\sum_{i=1}^4 B_iXB_i^\dagger\]
where  the matrices $A_i$ and $B_i$ are given by  the equations (\ref{eq-rank2-Herm}) and (\ref{eq-extr-rank-4}) respectively.
$\Phi_1$ is an extreme point of the set $\mathcal{CP}(\MM_2,\MM_2; \sigma)$ with Choi-rank $(\Phi_1)=2$, which has a minimal Kraus decomposition consisting of Hermitian Kraus operators  $\{A_i\}_{i=1}^2$.Also, by Theorem \ref{rank-4}, $\Phi_2$ is an extreme point of the set $\mathcal{CP}(\MM_3,\MM_3; \frac{\idm_{3}}{3},\frac{\idm_{3}}{3})$.

Define a new CP map 

$\Phi:\MM_6\to \MM_6$
\begin{equation}\label{CP-rank-8}
	\Phi(X)=\frac{1}{8}\sum_{i;j=1}^{2;4}\left(A_i\otimes B_j\right)X\left(A_i\otimes B_j\right)^\dagger
	\end{equation}
$\Phi$ have choi rank 8.
Then $\Phi$ is an extreme point of the set $ \mathcal{CP}(\MM_6,\MM_6,D,D)$ follows from the Theorem \ref{CP-tensor-extreme}.
\end{proof}

\begin{cor}
$\mathcal{MR}(D,D)=8=\lfloor{ \sqrt{6^2+6^2-1}}\rfloor$
\end{cor}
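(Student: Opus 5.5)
The corollary follows from two facts: the upper bound and the construction of Theorem~\ref{rank-8}. First I will invoke Parthasarathy's bound \cite{PARTHASARATHY}. For any extreme point $\rho$ of $\mathcal{C}(\rho_1,\rho_2)$ with $d_1=d_2=6$ it gives
\[
\operatorname{rank}(\rho)\le\left\lfloor\sqrt{6^2+6^2-1}\right\rfloor=\lfloor\sqrt{71}\rfloor=8,
\]
since $64\le 71<81$. Hence $\mathcal{MR}(D,D)\le 8$. Through Rudolph's Choi--Jamio\l{}kowski correspondence \cite{Rudolph}, the Choi rank of an extreme CP map equals the rank of the corresponding extremal state, so the same bound holds for extreme points of $\mathcal{CP}(\MM_6,\MM_6;D^T,D)$. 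Because $D$ is real diagonal, $D^T=D$, and no transpose issue arises.

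For the lower bound I will use Theorem~\ref{rank-8}. It provides an extreme point $\Phi$ of $\mathcal{CP}(\MM_6,\MM_6;D,D)$ with Kraus operators $A_i\otimes B_j$. To confirm that its Choi rank is exactly $8$ rather than smaller, I will check that the eight operators $A_i\otimes B_j$ are linearly independent. This holds because $\{A_1,A_2\}$ is linearly independent (one is diagonal, the other off-diagonal and nonzero). It also holds for $\{B_1,\dots,B_4\}$, since these four operators have disjoint matrix-unit supports. Tensor products of linearly independent families are linearly independent, so the Kraus rank, and hence the Choi rank, equals $8$.

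Translating back via Rudolph's correspondence, this gives an extreme point of $\mathcal{C}(D,D)$ of rank $8$, so $\mathcal{MR}(D,D)\ge 8$, and combining the two bounds yields equality.

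The only step with real content is the extremality in Theorem~\ref{rank-8}, which we take as given. The main thing to double-check is its hypothesis that $\Phi_1$ is extreme in $\mathcal{CP}(\MM_2,\MM_2;\sigma)$. This requires the four matrices $A_1^\dagger A_1$, $A_2^\dagger A_2$, $A_1^\dagger A_2$, $A_2^\dagger A_1$ to be linearly independent in $\MM_2$. A short direct computation confirms this: two of them are diagonal and independent, and the other two are off-diagonal with distinct entry patterns. With that verified, the corollary follows.
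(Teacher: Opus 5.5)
Your proposal is correct and follows the paper's own (implicit) argument: Parthasarathy's bound $\lfloor\sqrt{71}\rfloor=8$ gives the upper bound, and the extreme point of Theorem~\ref{rank-8} gives the lower bound. Your explicit check that the eight Kraus operators $A_i\otimes B_j$ are linearly independent also justifies the paper's bare assertion that the Choi rank is $8$. One small wording fix: $A_1^\dagger A_2$ and $A_2^\dagger A_1$ have the same off-diagonal support. They are independent because they are not proportional (off-diagonal entries $1/3,\,1/\sqrt{3}$ versus $1/\sqrt{3},\,1/3$), not because their entry patterns differ.
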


\begin{thm}
	There exists an extreme point of the set $\mathcal{CP}(\MM_{6k},\MM_{6k}; D_1, D_1)$ with Choi-rank $8k$
	where $D_1=\sigma\otimes \frac{\idm_3}{3}\otimes\frac{\idm_k}{k}$ where $k\geq 3$.
\end{thm}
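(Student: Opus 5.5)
The plan is to obtain the required map as a tensor product of two already-known extreme points, using Theorem~\ref{CP-tensor-extreme}, and then to reorder the tensor factors by a unitary permutation.

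\textbf{Choosing the factors.} For the factor carrying the hypothesis on Hermitian Kraus operators, I take the map of Theorem~\ref{CP-extrem-rank-d} with $d=k\geq 3$. This is exactly where the restriction $k\geq 3$ in the statement enters. I rescale it by $\frac1k$, so that
\[
\Phi_k(X)=\frac1k\sum_{j=1}^{k}V_jXV_j^\dagger ,
\]
with the $V_j$ as in (\ref{eq-extreme-rank-lowerbound}), and then $\Phi_k\in\mathcal{CP}(\MM_k,\MM_k;\frac{\idm_k}{k},\frac{\idm_k}{k})$. The Kraus operators are Hermitian. Indeed, $V_1$ is a real diagonal matrix, and each $V_l=\frac{1}{\sqrt{k-1}}(E_{1l}+E_{l1})$ is real symmetric.

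Theorem~\ref{CP-extrem-rank-d} asserts extremality in the one-sided set $\mathcal{CP}(\MM_k,\MM_k;\idm_k)$. Positive rescaling preserves extremality, since the set for $\frac{\idm_k}{k}$ is just $\frac1k$ times the set for $\idm_k$. Hence $\Phi_k$ is extreme in $\mathcal{CP}(\MM_k,\MM_k;\frac{\idm_k}{k})$. The $k$ Kraus operators $V_j$ are linearly independent, because they have disjoint supports. Therefore the decomposition is minimal and $\Phi_k$ has Choi rank $k$.

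For the second factor I take the map $\Phi$ of Theorem~\ref{rank-8}, given by (\ref{CP-rank-8}). It is extreme in $\mathcal{CP}(\MM_6,\MM_6;D,D)$ with $D=\sigma\otimes\frac{\idm_3}{3}$, and has Choi rank $8$.

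\textbf{Applying Theorem~\ref{CP-tensor-extreme}.} I apply it with $A_1=A_2=\frac{\idm_k}{k}$, $B_1=B_2=D$ and $d_1=d_2=6$. This gives that $\Phi_k\otimes\Phi$ is an extreme point of
\[
\mathcal{CP}\Bigl(\MM_{6k},\MM_{6k};\tfrac{\idm_k}{k}\otimes D,\tfrac{\idm_k}{k}\otimes D\Bigr).
\]
Its Kraus operators are $\frac{1}{\sqrt{8k}}V_j\otimes A_i\otimes B_l$. Tensor products of linearly independent families are linearly independent, so these $8k$ operators are independent and the Choi rank is exactly $8k$.

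\textbf{Reordering the factors.} The marginal is $\frac{\idm_k}{k}\otimes\sigma\otimes\frac{\idm_3}{3}$, whereas the statement asks for $\sigma\otimes\frac{\idm_3}{3}\otimes\frac{\idm_k}{k}$. I conjugate by the swap unitary $P$ that sends $\mathbb{C}^k\otimes\mathbb{C}^6$ to $\mathbb{C}^6\otimes\mathbb{C}^k$, replacing each Kraus operator $K$ by $PKP^\dagger$. This is exactly the argument of Observation~\ref{D1,D2}, with $U=V=P$:
\begin{itemize}
\item both marginals become $D_1$;
\item linear independence in condition (iii) of Theorem~\ref{Ex} is preserved under unitary conjugation;
\item the Kraus rank is unchanged.
\end{itemize}
This yields an extreme point of $\mathcal{CP}(\MM_{6k},\MM_{6k};D_1,D_1)$ with Choi rank $8k$.

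\textbf{Main obstacle.} The step needing most care is checking that the hypotheses of Theorem~\ref{CP-tensor-extreme} are met in the right order. The Hermitian-Kraus factor must be the first tensor factor, and it must be extreme in the larger one-sided set $\mathcal{CP}[\MM_d,\MM_d;A_2]$, not merely in the two-marginal set. I would therefore first verify directly that extremality in that one-sided set holds after the $\frac1k$ rescaling. The criterion is that $\{V_i^\dagger V_j\}$ is linearly independent, which is unaffected by scaling. I would also confirm that $\Phi_k$ actually lies in $\mathcal{CP}[\MM_k,\MM_k;A_1,A_2]$, that is, that both $\sum_j V_jV_j^\dagger$ and $\sum_j V_j^\dagger V_j$ equal $\idm_k$. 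This follows from the Hermiticity of the $V_j$.
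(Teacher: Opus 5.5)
Your proposal is correct and follows the paper's proof: you tensor the $\frac1k$-rescaled map of Theorem~\ref{CP-extrem-rank-d} (Hermitian Kraus operators, extreme in the one-sided set) with the rank-$8$ map of Theorem~\ref{rank-8} and apply Theorem~\ref{CP-tensor-extreme}, and your swap-unitary step is a welcome addition, since the paper's map actually has marginals $\frac{\idm_k}{k}\otimes\sigma\otimes\frac{\idm_3}{3}$ rather than $D_1$. One cosmetic point, which you inherit from (\ref{CP-rank-8}) as the paper does: since $\sum_j B_j^\dagger B_j=4\idm_3$, the overall prefactor should be $\frac{1}{24k}$ rather than $\frac{1}{8k}$ for the marginals to be exactly $D_1$, and this rescaling changes neither extremality nor the Choi rank.
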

\begin{proof}
Consider the CP maps $\Phi_1:\MM_k\to \MM_k$ and  $\Phi_2:\MM_6\to \MM_6$ given by Eqs.~(\ref{extreme-rank-d}) and (\ref{CP-rank-8}) respectively
\[	\Phi_1(X)=\frac{1}{k}\sum_{i=1}^k V_iXV_i^\dagger\]
\[	\Phi_2(X)=\frac{1}{8}\sum_{i;j=1}^{2;4}\left(A_i\otimes B_j\right)X\left(A_i\otimes B_j\right)^\dagger\]
$\Phi_1$ is an extreme point of the set $\mathcal{CP}(\MM_k,\MM_k; \sigma)$ with Choi-rank $(\Phi_1)=k$, which has a minimal Kraus decomposition consisting of Hermitian Kraus operators  $\{V_i\}_{i=1}^k$.Also, by Theorem \ref{rank-8}, $\Phi_2$ is an extreme point of the set $\mathcal{CP}(\MM_6,\MM_6; D,D)$.	
\end{proof}
Now construct a  new CP map $\Phi:\MM_{6k}\to \MM_{6k}$
\begin{equation}
	\Phi(X)=\frac{1}{8k}\sum_{i,j,k=1}^{2;4;k}\left(V_k\otimes A_i\otimes B_j \right)X\left(V_k\otimes A_i\otimes B_j\right)^\dagger
\end{equation}
$\Phi$ have Choi rank $8k$.Then $\Phi$ is an extreme point of the set $ \mathcal{CP}(\MM_6,\MM_6,D_1,D_1)$ follows from the Theorem \ref{CP-tensor-extreme}.
\begin{cor}
$\mathcal{MR}(D_1,D_1)>8k$ when $k\geq 3.$	
\end{cor}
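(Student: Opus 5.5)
The plan is to put the two-sided (non-Hermitian) extremality on a $3k$-dimensional factor rather than on the $6$-dimensional one. By Observation~\ref{D1,D2} and the remark after it, $\mathcal{MR}(D_1,D_1)$ depends only on the spectra, and $D_1=\sigma\otimes\frac{\idm_{3k}}{3k}$ after identifying $\frac{\idm_3}{3}\otimes\frac{\idm_k}{k}=\frac{\idm_{3k}}{3k}$. I would therefore write the target map as $\Phi_1\otimes\Psi$. Here $\Phi_1$ is the rank-$2$ map (\ref{CP-rank-2}), which is extreme in $\mathcal{CP}(\MM_2,\MM_2;\sigma)$ with Hermitian minimal Kraus operators $A_1,A_2$. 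The second factor $\Psi$ is an extreme point of $\mathcal{CP}(\MM_{3k},\MM_{3k};\frac{\idm_{3k}}{3k},\frac{\idm_{3k}}{3k})$ of Choi rank $r$. Theorem~\ref{CP-tensor-extreme} then makes $\Phi_1\otimes\Psi$ extreme in $\mathcal{CP}(\MM_{6k},\MM_{6k};D_1,D_1)$ with Choi rank $2r$. So the strict inequality $\mathcal{MR}(D_1,D_1)>8k$ reduces to producing such a $\Psi$ with $r\geq 4k+1$. This is numerically possible, since Parthasarathy's bound in $(3k,3k)$ is $\lfloor\sqrt{18k^2-1}\rfloor\geq 4k+1$ for $k\geq 3$.

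This splitting is forced. The construction displayed just before the statement uses the Hermitian factor $V_1,\dots,V_k$ of Theorem~\ref{CP-extrem-rank-d} and yields only rank $8k$. Moreover, any Hermitian factor that is extreme in the one-sided set $\mathcal{CP}(\MM_k,\MM_k;\cdot)$ has rank at most $k$, by the usual $r^2\le k^2$ count of linearly independent $\{V_i^\dagger V_j\}$. Any gain must therefore come from the two-sided factor $\Psi$.

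To build $\Psi$ I would adapt the construction in the proof of Theorem~\ref{$(d+m,d)$}. I would take cyclic-shift Kraus operators on a $(d+1)$-block, as in (\ref{K.r.1}), together with extra rank-one operators, as in (\ref{K.r.2}), and adjust the coefficients so that both marginals become exactly $\frac{\idm_{3k}}{3k}$. I would use $4k+1$ Kraus operators in total. Only for $3k\in\{5j\}$ or $3k\in\{9,12\}$ can I instead directly invoke the known maximal-rank examples in the table from \cite{DEVENDRA}. To check condition~(iii) of Theorem~\ref{Ex}, I would follow the Gram-matrix method used there. I would form $M=AA^T$ from the vectorized blocks $\begin{pmatrix}V_i^\dagger V_j&0\\0&V_jV_i^\dagger\end{pmatrix}$ and show that it equals $I\otimes I$ plus a sum of positive semidefinite terms, so that it is invertible.

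The main obstacle is this last step. With both marginals forced to be maximally mixed, the rank-one operators $\ket{\J}\bra{e_i}$ used in Theorem~\ref{$(d+m,d)$} no longer balance the first marginal. The cross terms in $M$ then need not be positive semidefinite, and linear independence must be checked by hand, for example by a block/circulant diagonalization via the shift $S$. If no uniform family can be found, I would fall back on the case covered by the tables, namely $k=5j$ with $3k=15j$. There the $\Psi$ from \cite{DEVENDRA} already gives $r=\lfloor\sqrt{2(3k)^2-1}\rfloor>4k$. The general $k\geq3$ would remain to be handled by the explicit construction above.
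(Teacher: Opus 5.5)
\textbf{What the paper actually proves.} The paper's only support for this corollary is the construction placed just before it. Ohno's Hermitian rank-$k$ map is tensored with the rank-$8$ map on $\MM_6$. By the tensor-product theorem (Theorem~3.6 of \cite{DEVENDRA}) the result is extreme in $\mathcal{CP}(\MM_{6k},\MM_{6k};D_1,D_1)$ with Choi rank exactly $8k$. That gives $\mathcal{MR}(D_1,D_1)\geq 8k$ and nothing more.

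The threshold $k\geq 3$ is exactly where Parthasarathy's bound $\lfloor\sqrt{72k^2-1}\rfloor$ exceeds $8k$. So the printed ``$>$'' most likely means ``$\geq 8k$, which lies strictly below the bound''. For that reading, the construction you already identify as yielding rank $8k$ is the whole proof. You instead try to prove a genuine strict inequality, and that attempt has two gaps.

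\textbf{First gap: the numerics fail for $k=3,4$.} Your claim $\lfloor\sqrt{18k^2-1}\rfloor\geq 4k+1$ for $k\geq 3$ is false. It is equivalent to $k^2-4k-1\geq 0$, i.e.\ $k\geq 5$. Indeed $\lfloor\sqrt{161}\rfloor=12=4k$ for $k=3$, and $\lfloor\sqrt{287}\rfloor=16=4k$ for $k=4$. So for $k=3,4$ no $\Psi$ with $r\geq 4k+1$ can exist. The tabulated $(9,9)$ and $(12,12)$ examples you want to invoke have rank $4k$, so $\Phi_1\otimes\Psi$ again has rank exactly $8k$.

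For $k=3$ the obstruction is not specific to your splitting. Take any factorization $18=ab$, a Hermitian one-sided-extreme factor (rank $\leq a$) and a two-sided factor (rank $\leq\lfloor\sqrt{2b^2-1}\rfloor$). Their tensor product has rank at most $24$. So strictness would require a direct, non-product extreme point of the maximal rank $25=\lfloor\sqrt{647}\rfloor$, which neither you nor the paper provides. Your remark that the splitting is ``forced'' holds only inside such product constructions.

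\textbf{Second gap: $\Psi$ is never constructed for $k\geq 5$.} Here your argument rests on a $\Psi$ that is never built. You only sketch its intended shape. You note yourself that the $\ket{\J}\bra{e_i}$ terms no longer balance the first marginal, and that the Gram matrix need no longer be $I\otimes I$ plus positive semidefinite terms. You give no verification of condition (iii) of Theorem~\ref{Ex}, and that verification is precisely the part of Rudolph's question that is open for most $d=3k$.

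Your fallback is also narrower than stated. The table lists $(d,d)$ only for $d=5j$ with $3\leq j\leq 14$, not for all $j$. Hence $3k=15j$ is available only for $k\in\{5,10,15,20\}$. As written, the proposal establishes the strict inequality for those four values and for no other $k\geq 3$.
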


\section{Appendix}\label{appendix}

\subsection{Calculation for $AA^T=M$}\label{AAT=M}

\begin{note*}
	If $M_i=\begin{pmatrix}
		A^1_i & 0\\
		0 & A^2_i
	\end{pmatrix}$ and A is the matrix whose $i^{th}$ columns  is $\ket{M_i}$ then 
	$A^\dagger A=\sum_{i,j}\left(\tr (A_i^{1^\dagger} A^1_j)+\tr(A_i^{2^\dagger} A^2_j)\right)E_{i,j}.$
\end{note*}
Now we have 
\begin{align}
	\begin{cases}
		V_i &= \sum_{k=1}^{d} \ket{e^d_k} \bra{e^{d+m}_{\text{mod}(k+i-2, d+1)+1}} \quad i = 1, 2, \dots, d+1 \\
	    W_i &= \ket{\J} \bra{e^{d+m}_i} \quad i = d+2, \dots, d+m
	\end{cases}
\end{align}
\[M_{i,j}=\begin{pmatrix}
	V_i^\dagger V_j & 0\\
	0 & V_jV_i^\dagger
\end{pmatrix}^{d+m}_{i,j=1}\] 
\small
\begin{align*}
\begin{cases}
	\vspace{0.3 cm}
\begin{pmatrix}
V_i^\dagger V_j & 0\\
	0 & V_jV_i^\dagger
\end{pmatrix}
^{d+1}_{i,j=1}& =
\begin{pmatrix}
	\sum^d_{k=1}\ket{e^{d+m}_{\Mod[k+i-2,d+1]+1}}\bra{e^{d+m}_{\Mod[k+j-2,d+1]+1}} & 0\\
	0 &\sum^d_{k,l=1}\delta_{\Mod[k + j - l-1, d + 1] + 1,i} \ket{e_k}\bra{e_l}
\end{pmatrix}^{d+1}_{i,j=1}\\
\vspace{0.3 cm}
\begin{pmatrix}
	W_i^\dagger W_j & 0\\
	0 & W_jW_i^\dagger
\end{pmatrix}^{d+m}_{i,j=d+2}&
=\begin{pmatrix}
	d\ket{ e^{d+m}_i}\bra{ e^{d+m}_j}& 0\\
	0 & \delta_{i,j}\ket{J}\bra{J}
\end{pmatrix}^{d+m}_{i,j=d+2}\\
\vspace{0.3cm}
\begin{pmatrix}
	V_i^\dagger W_j & 0\\
	0 & W_jV_i^\dagger
\end{pmatrix}_{i=1,j=d+2}^{d+1,d+m} &=
\begin{pmatrix}
	\sum_{k=1}^d\ket{e^{d+m}_{\Mod[k+i-2,d+1]+1}}\bra{e^{d+m}_{j}} & 0\\
	0 & 0
\end{pmatrix}^{d+1,d+m}_{i,j=1,d+2}\\
\vspace{0.3 cm}
\begin{pmatrix}
	W_i^\dagger V_j & 0\\
	0 & V_jW_i^\dagger
\end{pmatrix}^{d+m,d+1}_{i=d+2,j=1}&
=\begin{pmatrix}
	\sum_{k=1}^d\ket{e^{d+m}_i}\bra{e^{d+m}_{\Mod[k+j-2,d+1]+1}}& 0\\
	0 & 0
\end{pmatrix}^{d+m,d+1}_{i=d+2,j=1}
\end{cases}
\end{align*}
\normalsize
	In this way total matrices will be
	$(d+1)^2+(m-1)^2+2(d+1)(m-1=(d+m)^2$\\
	Let \(A\) is the matrix whose $((i-1)(d+m)+j)^{th}$ column is $\ket{M_{i,j}}$. For $AA^T$ we need to calculate.

\begin{align*}
\begin{cases}
\left(\tr(A^{1^\dagger}_{i,j} A^1_{r,s})+\tr(A^{2^\dagger}_{i,j} A^2_{r,s})\right)_{i,j=1}^{d+1}
&= \sum_{k,l=1}^d\left(\delta_{\Mod[k+i-2,d+1]+1,\Mod[l+r-2,d+1]+1}\right)\\&\left(\delta_{\Mod[k+j-2,d+1]+1,\Mod[l+s-2,d+1]+1}\right)+\\&\delta_{\Mod[k+j-l-1,d+1]+1,i}\delta_{\Mod[k+s-l-1,d+1]+1,r}\\
\vspace{0.4 cm}
\left(\tr(A^{1^\dagger}_{i,j} B^1_{r,s})+\tr(A^{2^\dagger}_{i,j} B^2_{r,s})\right)_{\{i,j=1;r,s=d+1\}}^{d+1,d+m}&=\delta_{r,s}\sum_{k,l=1}^d\delta_{\Mod[k+j-l-1,d+1]+1,i}\\
\vspace{0.4 cm}
\left(\tr(B^{1^\dagger}_{i,j} A^1_{r,s})+\tr(B^{2^\dagger}_{i,j} A^2_{r,s})\right)_{\{i,j=d+1;r,s=1\}}^{d+m,d+1}&=\delta_{i,j}\sum_{k,l=1}^d\delta_{\Mod[k+s-l-1,d+1]+1},r\\
\vspace{0.4 cm}
\left(\tr(B^{1^\dagger}_{i,j} B^1_{r,s})+\tr(B^{2^\dagger}_{i,j} B^2_{r,s})\right)_{\{i,j=d+2;r,s=d+2\}}^{d+m,d+m}&=d^2\delta_{i,r}\delta_{j,s}+d^2\delta_{i,j}\delta_{r,s}\\
\vspace{0.4 cm}
\left(\tr(C^{1^\dagger}_{i,j} C^1_{r,s})+\tr(C^{2^\dagger}_{i,j} C^2_{r,s})\right)_{\{i,j=1,d+2;r,s=1,d+2\}}^{\{d+1,d+m;d+1,d+m\}}&=\delta_{j,s}\sum_{k,l=1}^d\delta_{\Mod[k+i-2,d+1]+1,\Mod[l+r-2,d+1]+1}\\
\left(\tr(D^{1^\dagger}_{i,j} D^1_{r,s})+\tr(D^{2^\dagger}_{i,j} D^2_{r,s})\right)_{\{i,j=d+2,1;r,s=d+2,1\}}^{\{d+m,d+1;d+m,d+1\}}&=\delta_{i,r}\sum_{k,l=1}^d\delta_{\Mod[k+j-2,d+1]+1,\Mod[l+s-2,d+1]+1}
\end{cases}
\end{align*}

\begin{equation*}
\begin{aligned}
&M=\sum_{i,j,r,s=1}^{d+1}\sum_{k,l=1}^d(\delta_{\Mod[k+i-2,d+1]+1,\Mod[l+r-2,d+1]+1}\delta_{\Mod[k+j-2,d+1]+1,\Mod[l+s-2,d+1]+1}\\
&\quad+\delta_{\Mod[k+j-l-1,d+1]+1,i}\delta_{\Mod[k+s-l-1,d+1]+1,r})E'_{(i-1)(d+m)+j,(r-1)(d+m)+s}\\ &\quad+\sum_{i,j=1;r=d+2}^{d+1;d+m}\sum_{k,l=1}^d\delta_{\Mod[k+j-l-1,d+1]+1,i}E'_{(i-1)(d+m)+j,(r-1)(d+m)+r}\\
&\quad+\sum_{i=d+2;r,s=1}^{d+m;d+1}\sum_{k,l=1}^d\delta_{\Mod[k+s-l-1,d+1]+1,r} E'_{(i-1)(d+m)+i,(r-1)(d+m)+s}\\
&\quad+\sum_{i,j=d+2}^{d+m}d^2E'_{(i-1)(d+m)+j,(i-1)(d+m)+j}+d^2\sum_{i,r=d+2}^{d+m}E'_{(i-1)(d+m)+i,(r-1)(d+m)+r}\\
&\quad+\sum_{i,r=1;j=d+2}^{d+1;d+m}\sum_{k,l=1}^d\delta_{\Mod[k+i-2,d+1]+1,\Mod[l+r-2,d+1]+1 }E'_{(i-1)(d+m)+j,(r-1)(d+m)+j}\\
&\quad+\sum_{i=d+2;j,s=1}^{d+m;d+1}\sum_{k,l=1}^d\delta_{\Mod[k+j-2,d+1]+1,\Mod[l+s-2,d+1]+1}E'_{(i-1)(d+m)+j,(i-1)(d+m)+s}
\end{aligned}
\end{equation*}

\begin{equation*}
\begin{aligned}
&=d\sum_{i,j=1}^{d+1}E'_{(i-1)(d+m)+j,(i-1)(d+m)+j}+(d-1)\sum_{i,j,r,s,=1;i\neq r;j\neq s}^{d+1}\delta_{i-r,j-s}E'_{(i-1)(d+m)+j,(r-1)(d+m)+s}\\
&+d\sum_{i,r=1}^{d+1}E'_{(i-1)(d+m)+i,(r-1)(d+m)+r}+(d-1)\sum_{i,j,r,s,=1;i\neq j;r\neq s}^{d+1}\delta_{i-j,r-s}E'_{(i-1)(d+m)+j,(r-1)(d+m)+s}\\
&+d\sum_{i=1;r=d+2}^{d+1;d+m}E'_{(i-1)(d+m)+i,(r-1)(d+m)+r}+(d-1)\sum_{i,j=1;r=d+2;i\neq j}^{d+1,d+m}E'_{(i-1)(d+m)+j,(r-1)(d+m)+r}\\
&+d\sum_{i=d+2,r=1}^{d+m;d+1}E'_{(i-1)(d+m)+i,(r-1)(d+m)+r}+(d-1)\sum_{i=d+2;r,s=1;r\neq s}^{d+m;d+1}E'_{(i-1)(d+m)+i,(r-1)(d+m)+s}\\
&+d^2\sum_{i,j=d+2}^{d+m}E'_{(i-1)(d+m)+j,(i-1)(d+m)+j}+d^2\sum_{i,r=d+2}^{d+m}E'_{(i-1)(d+m)+i,(r-1)(d+m)+r}\\
&+d\sum_{i=1;j=d+2}^{d+1;d+m}E'_{(i-1)(d+m)+j,(i-1)(d+m)+j}+(d-1)\sum_{i,r=1;j=2;i\neq r}^{d+1,d+m}E'_{(i-1)(d+m)+j,(r-1)(d+m)+j}\\
&+d\sum_{i=d+2;j=1}^{d+m;d+1}E'_{(i-1)(d+m)+j,(i-1)(d+m)+j}+(d-1)\sum_{i=d+2;j,s=1;j\neq s}^{d+m;d+1}E'_{(i-1)(d+m)+j,(i-1)(d+m)+s}\\
\end{aligned}
\end{equation*}

\begin{equation*}
	\begin{aligned}
&=\sum_{i,j=1}^{d+1}E'_{(i-1)(d+m)+j,(i-1)(d+m)+j}+2(d-1)\sum_{i,j,r,s=1}^{d+1}\delta_{i-r,j-s}E'_{(i-1)(d+m)+j,(r-1)(d+m)+s}\\
&+\sum_{i,r=1}^{d+1}E'_{(i-1)(d+m)+i,(r-1)(d+m)+r}
+\sum_{i=1;r=d+2}^{d+1;d+m}E'_{(i-1)(d+m)+i,(r-1)(d+m)+r}\\
&+(d-1)\sum_{i,j=1;r=d+2 }^{d+1,d+m}E'_{(i-1)(d+m)+j,(r-1)(d+m)+r}+
\sum_{i=d+2,r=1}^{d+m;d+1}E'_{(i-1)(d+m)+i,(r-1)(d+m)+r}\\
&+(d-1)\sum_{i=d+2;r,s=1 }^{d+m;d+1}E'_{(i-1)(d+m)+i,(r-1)(d+m)+s}+
d^2\sum_{i,j=d+2}^{d+m}E'_{(i-1)(d+m)+j,(i-1)(d+m)+j}\\
&+d^2\sum_{i,r=d+2}^{d+m}E'_{(i-1)(d+m)+i,(r-1)(d+m)+r}+
\sum_{i=1;j=d+2}^{d+1;d+m}E'_{(i-1)(d+m)+j,(i-1)(d+m)+j}\\
&+(d-1)\sum_{i,r=1;j=2; }^{d+1,d+m}E'_{(i-1)(d+m)+j,(r-1)(d+m)+j}+
\sum_{i=d+2;j=1}^{d+m;d+1}E'_{(i-1)(d+m)+j,(i-1)(d+m)+j}\\
&+(d-1)\sum_{i=d+2;j,s=1 }^{d+m;d+1}E'_{(i-1)(d+m)+j,(i-1)(d+m)+s}
\end{aligned}
\end{equation*}

After simplification we will get
\begin{equation*}
	\begin{aligned}
		M &=\sum_{x,y=1}^{d+m}(E_{x,x}\otimes E_{y,y}+E_{x,y}\otimes E_{x,y})+(d^2-1)\sum_{r,s=d+2}^{d+m}(E_{r,s}\otimes E_{r,s}+E_{r,r}\otimes E_{s,s})\,+\\
		&2(d-1)\sum_{i,j=1}^{d+1}E_{i,j}\otimes S^{Mod[i-j-1,d+1]+1}
		+(d-1)\sum_{r=d+2}^{d+m}\sum_{i,j=1}^{d+1}(E_{i,r}\otimes E_{j,r}
		+E_{r,i}\otimes E_{r,j}
		+E_{i,j}\otimes E_{r,r}+E_{r,r}\otimes E_{i,j})
	\end{aligned}	
\end{equation*}
where $ S=\sum_{k=1}^{d+1}\ket{e_{\Mod[k,d+1]+1}}\bra{k}$.

\begin{lem}\label{$rho^T$}
If $\rho$ is the choi state corresponding to completely positive map $\Phi$ in (\ref{C.P.E}) then $\rho$ is PPT.
\end{lem}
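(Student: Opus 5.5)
The plan is to write the Choi state $\rho$ as a sum of rank-one projectors onto the vectorizations of the Kraus operators in (\ref{C.P.E}), and then take the partial transpose term by term. Up to the positive normalization $\frac{1}{d(d+m)}$, we have $\rho=\sum_{i=1}^{d+m}\ket{v_i}\bra{v_i}$ with $\ket{v_i}=\ket{V_i^\dagger}$, the vectorization of the $(d+m)\times d$ matrix $V_i^\dagger$ in $\mathbb{C}^d\otimes\mathbb{C}^{d+m}$. Two bookkeeping points need fixing at the outset:
\begin{itemize}
\item Since $\rho^{T_2}=(\rho^{T_1})^T$ and transposition preserves positivity, it does not matter which factor we transpose.
\item The indexing convention of the vectorization only permutes tensor factors, so it does not affect positivity either.
\end{itemize}
The Kraus operators split into two families, the cyclic ones from (\ref{K.r.1}) and the rank-one ones from (\ref{K.r.2}). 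I will show that each family contributes a positive semidefinite operator after partial transposition.

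\textbf{The rank-one family, $i=d+2,\dots,d+m$.} Here $V_i^\dagger=\ket{e^{d+m}_i}\bra{\J}$, so $\ket{v_i}=\ket{\J}\otimes\ket{e^{d+m}_i}$, up to conjugation and ordering. This is a product vector, so $\ket{v_i}\bra{v_i}$ is a product of positive operators. Its partial transpose $\left(\ket{\J}\bra{\J}\right)^T\otimes\ket{e_i}\bra{e_i}$ is therefore again positive semidefinite.

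\textbf{The cyclic family, $i=1,\dots,d+1$.} Writing $\sigma_i(k)=\Mod(k+i-2,d+1)+1$, we have $\ket{v_i}=\sum_{k=1}^d\ket{k}\otimes\ket{\sigma_i(k)}$. Indices of the second factor live in $\{1,\dots,d+1\}$, which I identify with $\mathbb{Z}_{d+1}$, and I regard $k\in\{1,\dots,d\}\subset\mathbb{Z}_{d+1}$. With this identification $\sigma_i(k)=k+(i-1)$. Then
\[
\sum_{i=1}^{d+1}\ket{v_i}\bra{v_i}=\sum_{k,l=1}^{d}\ket{k}\bra{l}\otimes\sum_{t\in\mathbb{Z}_{d+1}}\ket{k+t}\bra{l+t},
\]
and its partial transpose on the first factor is
\[
\sum_{k,l=1}^{d}\sum_{t}\ket{l}\bra{k}\otimes\ket{k+t}\bra{l+t}.
\]
The key observation is that a matrix unit $\ket{l,a}\bra{k,b}$ appears, with coefficient $1$, exactly when $a=k+t$ and $b=l+t$ for some $t$, i.e.\ exactly when $l+a\equiv k+b \pmod{d+1}$. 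Grouping by the common value $s=l+a=k+b\in\mathbb{Z}_{d+1}$, the operator equals
\[
\sum_{s\in\mathbb{Z}_{d+1}}\ket{\phi_s}\bra{\phi_s},\qquad \ket{\phi_s}=\sum_{l=1}^{d}\ket{l}\otimes\ket{s-l}.
\]
Each summand is a rank-one projector, so this contribution is positive semidefinite.

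Adding the two families, $\rho^{T_1}$ is a sum of positive semidefinite operators, so $\rho$ is PPT. The main obstacle I expect is the bookkeeping in the cyclic family: one must verify that the bijection $(t,k,l)\leftrightarrow$ (pairs in the same class $s$) is exact, with no double counting and no missing terms. This relies on $t$ ranging over all of $\mathbb{Z}_{d+1}$. The truncation $k,l\le d$ causes no trouble, because it simply restricts the first-factor index inside each $\ket{\phi_s}$ to $l\le d$.
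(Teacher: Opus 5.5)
Your proof is correct and follows the paper's route. The paper likewise splits $\rho^{T_1}$ into the cyclic part and $\J\otimes(0\oplus I_{m-1})$. It proves the cyclic part positive by writing it as a block matrix with blocks $S^{r-s}$ (powers of the cyclic shift $S$) and factoring it as a Gram product $X^\dagger X$, where $X$ is a block row of powers of $S$. Your vectors $\ket{\phi_s}=\sum_{l}\ket{l}\otimes\ket{s-l}$ are exactly the columns $X^\dagger\ket{s}$ (up to relabelling), so your rank-one decomposition is the same factorization written out vector by vector, with cleaner index bookkeeping than the paper's block list $\idm,S,\dots,S^d$.
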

\begin{proof}

\begin{equation}
	\Phi(X)=\frac{1}{d(d+m)}\sum_{i=1}^{d+m}V_i^\dagger XV_i
\end{equation}
where 

\begin{equation}
	V_i=\sum_{k=1}^{d}\ket {e^d_k}\bra{e^{d+m}_{k+i-1}}, \quad\quad i=1,2...d+1
\end{equation}

\begin{equation}
	V_i=\ket{\J}\bra{e^{d+m}_i}, \quad\quad  i=d+2,.....d+m 
\end{equation}
So
\[\Phi(X)=\frac{1}{d(d+m)}\left(\sum_{i=1}^{d+1}\sum_{k,l=1}^{d} X_{k,l}\ket{e_{k+i-1}}\bra{e_{l+i-1}}+\sum_{i=d+2}^{m}\ket{e_i}\bra{\J}X\ket{\J}\bra{e_i}\right)\]
Since $\rho$ is the Choi- state corresponding to $\Phi$ so
\[\rho=\frac{1}{d(d+m)}\left(\sum_{r,s=1}^d \ket{e_r^d}\bra{e_s^d}\otimes \left(\sum_i^{d+1} \ket{e_{\Mod[r+i-2,d+1]+1}^{d+m}}\bra{e_{\Mod[s+i-2,d+1]+1}^{d+m}} +(0\oplus I_{m-1})\right)\right)\]
\[=\frac{1}{d(d+m)}\left(\sum_{i=1}^{d+1}\sum_{r,s=1}^d\ket{e_r^d}\bra{e_s^d}\otimes\ket{e_{\Mod[r+i-2]+1}^{d+m}}\bra{e_{\Mod[s+i-2,d+1]+1}^{d+m}} +\J\otimes(0\oplus I_{m-1})\right)\]
\[=\frac{1}{d(d+m)}\left(\sum_{i=1}^{d+1}\sum_{r,s=1}^d  \ket{e_r^de_{r+i-1}^{d+m}}\bra{e_s^de_{s+i-1}^{d+m}} +\J\otimes(0\oplus I_{m-1})\right)\]

\begin{equation}{\label{PS}}
\rho^{T_1}=\frac{1}{d(d+m)}\left(\sum_{i=1}^{d+1}\sum_{r,s=1}^d\ket{e_s^d}\bra{e_r^d}\otimes \ket{e_{\Mod[r+i-2,d+1]+1}^{d+m}}\bra{e_{\Mod[s+i-2,d+1]+1}^{d+m}} +\J\otimes(0\oplus I_{m-1})\right)
	\end{equation}
Consider 
\[S=\sum_{i=1}^{d+1} \ket{e^{d+m}_{i+1}}\bra{e^{d+m}_i}.\]
So
\[\sum_{i=1}^{d+1}\ket{e_{\Mod[r+i-2,d+1]+1}^{d+m}}\bra{e_{\Mod[s+i-2,d+1]+1}^{d+m}}=S^{\Mod[r-s-1,d+1]+1}\]
then
\[\sum_{i=1}^{d+1}\sum_{r,s=1}^d\ket{e_s^d}\bra{e_r^d}\otimes \ket{e_{\Mod[r+i-2,d+1]+1}^{d+m}}\bra{e_{\Mod[s+i-2,d+1]+1}^{d+m}}=\begin{pmatrix}
	\idm  \\  
	S^\dagger \\
	 S^{2^\dagger}\\
	.\\
	.\\
	S^{d^\dagger}
\end{pmatrix}\begin{pmatrix}
	\idm & S& S^2  &.  &.  &S^d
\end{pmatrix}.\]
$\rho^{T_1}\geq 0$ as both term in equation(\ref{PS}) is positive semidefinite. Hence $\rho$ is PPT.
\end{proof}

	\printbibliography
	\end{document}